\documentclass[sigconf]{acmart}

\setcopyright{acmlicensed}
\acmConference[KDD '26]{Proceedings of the 32nd ACM SIGKDD Conference on Knowledge Discovery and Data Mining}{August 2026}{Jeju, Korea}

\AtBeginDocument{%
  }

\usepackage{pifont}  
\usepackage{booktabs} 
\usepackage{enumitem}
\usepackage{multirow} 
\usepackage{svg}
\usepackage{titletoc}
\usepackage{subcaption}
\usepackage{tikz}
\usepackage{hyperref}

\begin{document}

\title{\mname: Structure Elucidation from Experimental NMR Spectra via Set Transformers}

\newcommand{\mname}{NMRTrans}
\newcommand{\dataset}{NMRSpec}

\newcommand{\cmark}{\ding{51}} 
\newcommand{\xmark}{\ding{55}} 

\renewcommand{\authors}{Liujia Yang$^\ast$, Zhuo Yang$^\ast$, Jiaqing Xie$^\ast$, Yubin Wang$^\ast$, Ben Gao, Xingjian Wei, Jiaxing Sun, Tianfan Fu, Jiang Wu$^\dagger$, Conghui He$^\dagger$, Yuqiang Li$^\dagger$, Qinying Gu$^\dagger$}

\author{Liujia Yang$^\ast$\\Zhuo Yang$^\ast$} 
\email{{yangliujia, yangzhuo}@pjlab.org.cn}
\affiliation{\institution{Shanghai Artificial Intelligence Laboratory}\city{Shanghai}\country{China}}

\author{Jiaqing Xie$^\ast$\\Yubin Wang$^\ast$}
\email{{xiejiaqing, wangyubin}@pjlab.org.cn}
\affiliation{\institution{Shanghai Artificial Intelligence Laboratory}\city{Shanghai}\country{China}}

\author{Ben Gao\\Xingjian Wei\\Jiaxing Sun}
\email{{gaoben, weixingjian, sunjiaxing}@pjlab.org.cn}
\affiliation{\institution{Shanghai Artificial Intelligence Laboratory}\city{Shanghai}\country{China}}

\author{Tianfan Fu}
\email{futianfan@pjlab.org.cn}
\affiliation{\institution{Nanjing University}\city{Nanjing}\country{China}}
\affiliation{\institution{Shanghai Artificial Intelligence Laboratory}\city{Shanghai}\country{China}}

\author{Jiang Wu$^\dagger$\\Conghui He$^\dagger$}
\email{{wujiang, heconghui}@pjlab.org.cn}
\affiliation{\institution{Shanghai Artificial Intelligence Laboratory}\city{Shanghai}\country{China}}

\author{Yuqiang Li$^\dagger$\\Qinying Gu$^\dagger$}
\email{{liyuqiang, guqinying}@pjlab.org.cn}
\affiliation{\institution{Shanghai Artificial Intelligence Laboratory}\city{Shanghai}\country{China}}

\renewcommand{\shortauthors}{Yang et al.}

\begin{abstract}
Nuclear Magnetic Resonance (NMR) spectroscopy is fundamental for molecular structure elucidation, yet interpreting spectra at scale remains time-consuming and highly expertise-dependent. 
While recent spectrum-as-language modeling and retrieval-based methods have shown promise, they rely heavily on large corpora of computed spectra and exhibit notable performance drops when applied to experimental measurements.
 To address these issues, we build \textbf{\dataset}, a large-scale corpus of experimental $^1$H and $^{13}$C NMR spectra mined from chemical literature, 
 and propose \textbf{\mname}, which models spectra as unordered peak sets and aligns the model's inductive bias with the physical nature of NMR. 
 To our best knowledge, \mname\ is the first NMR Transformer trained solely on large-scale experimental spectra and achieves state-of-the-art performance on experimental benchmarks, improving Top-10 Accuracy over the strongest baseline by $+17.82$ points (61.15\% vs.\ 43.33\%), and underscoring the importance of experimental data and structure-aware architectures for reliable NMR structure elucidation.
\end{abstract}

\begin{CCSXML}
<ccs2012>
   <concept>
       <concept_id>10010405.10010444.10010087.10010098</concept_id>
       <concept_desc>Applied computing~Molecular structural biology</concept_desc>
       <concept_significance>500</concept_significance>
   </concept>
</ccs2012>
\end{CCSXML}

\ccsdesc[500]{Applied computing~Molecular structural biology}

\keywords{
AI for Chemistry, 
AI for Spectra, 
Nuclear Magnetic Resonance,
Structure Elucidation,
Spectra-to-SMILES,
Experimental Spectra
}

\maketitle

\section{Introduction}

Nuclear Magnetic Resonance (NMR) spectroscopy stands as a cornerstone analytical technique in organic chemistry,
offering uniquely detailed insights into molecular structure at the atomic level~\cite{Silverstein1962, Silverstein2014}.
Unlike other spectroscopic methods such as infrared (IR)~\cite{Smith2018} and mass spectrometry (MS)~\cite{Boyd1997BookRM, Gross2017}, which primarily capture vibrational modes or molecular weight information, NMR reveals atomic connectivity, stereochemistry, and local electronic environments through chemical shifts and coupling constants~\cite{Duddeck1989, Kwan2008}.
This makes NMR indispensable for molecular structure elucidation, particularly in natural product discovery, pharmaceutical development, and synthetic chemistry~\cite{CLARIDGE201611, Roberts2000, Chiliveri2021, Huang2023}.

Despite its analytical power, interpreting NMR spectra remains a labor-intensive and expertise-dependent process that has not kept pace with modern high-throughput chemistry.
The forward problem of NMR spectroscopy is to obtain spectra from known molecular structures. It is well-defined and routinely achievable through direct experimental acquisition, quantum chemical simulation~\cite{Schreckenbach1995}, or machine learning-based prediction~\cite{Willcott2009}.
In contrast, the inverse problem aims to infer molecular structure from spectra and still largely relies on manual expert interpretation~\cite{Huang2021, yang2025spectrumworld}.
Conventional elucidation workflows involve iterative reasoning over chemical-shift assignments, coupling patterns, and functional group inference, rendering the process non-scalable and error-prone~\cite{Burns2019}.
As automated synthesis and closed-loop experimentation become increasingly prevalent in chemical research~\cite{Ley2015, Burger2020}, the bottleneck of manual spectral interpretation has emerged as a critical obstacle to autonomous molecular discovery~\cite{Huang2021}, motivating the need for reliable and scalable solutions to this inverse problem.

Building on these advances, a growing body of work has explored automated molecular structure elucidation from NMR spectra using data-driven approaches.
Most existing methods formulate the inverse problem as a conditional generation task, encoding NMR spectra and autoregressively decoding molecular representations such as SMILES \cite{nmr2struct, xue2025nmrmind, zhou2025nmrformer}.
Despite promising progress, these approaches suffer from two fundamental limitations.
First, due to the scarcity of curated experimental NMR data, many models are trained primarily on simulated spectra,  which leads to substantial performance degradation when applied to real experimental spectra~\cite{Mirza2024, nmrsolver}.
Second, NMR spectra are commonly treated as ordered sequences and encoded using standard Transformer with positional encodings, implicitly imposing artificial ordering among spectral peaks that lacks physical meaning and can mislead the model~\cite{Alberts2023, vita2024leveraging, nmr2struct}.

To address these limitations, we propose \textbf{\mname}, a framework for automated NMR-based structure elucidation that learns directly from large-scale experimental spectra and explicitly respects the structural properties of NMR spectral data. Our main contributions are summarized as follows:

\begin{itemize}[leftmargin=*, nosep]
    \item \textbf{Curation of large-scale experimental NMR dataset.} We construct \textbf{\dataset}, a large-scale corpus of experimental $^1$H and $^{13}$C NMR spectra mined from chemical literature, enabling model training on realistic experimental spectral distributions that are inaccessible through simulation alone.
    
    \item \textbf{Set Transformer architecture for NMR spectra.} We customize a \textit{Set Transformer}-based encoder \cite{lee2019set} to process $^1$H and $^{13}$C NMR spectra through parallel encoding pathways. By retaining permutation invariance and removing positional encodings, the proposed architecture focuses on chemically meaningful features, rather than spurious positional relationships among peaks.
    \item \textbf{Superior performance on experimental benchmarks.} Extensive experiments show that \mname\ consistently outperforms state-of-the-art generative and retrieval-based baselines in both accuracy and robustness on experimental NMR spectra, highlighting the importance of experimental data and architectures that respect the intrinsic structural properties of spectral data.
\end{itemize}
\section{Preliminary}
\label{sec:preliminary}

\begin{figure}[t]
    \centering
    \includegraphics[width=\linewidth]{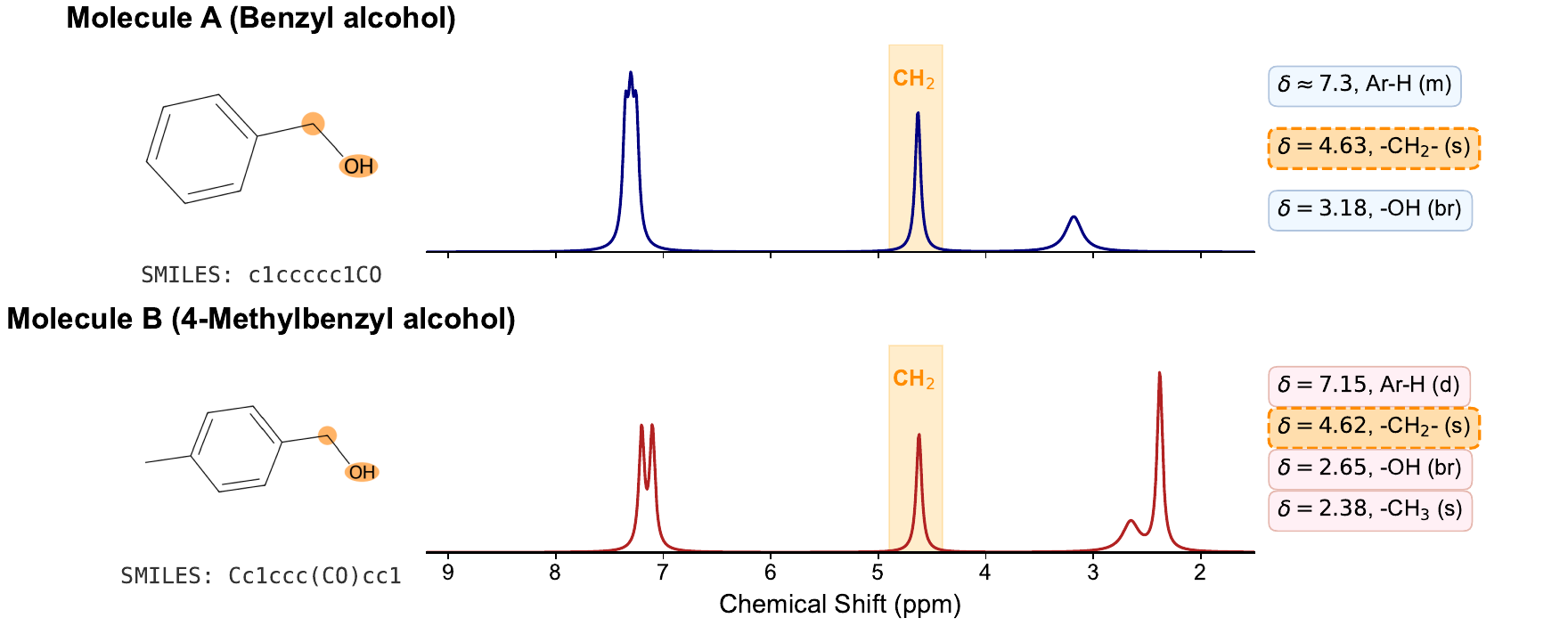}
    \caption{Local chemical environments determine NMR spectral features.}
    \label{fig:nmr_chemical_basis}
\end{figure}



\subsection{The Nature of NMR}
\label{subsec:nmr_as_sets}

As illustrated in Figure~\ref{fig:nmr_chemical_basis}, NMR spectral features are determined by local chemical environments. The ordering of peaks, however, is arbitrary and carries no structural information. Consequently, NMR spectra are naturally modeled as unordered peak sets.  Treating spectra as sequences therefore introduces an artificial ordering bias, which can cause predictions to depend on arbitrary indexing rather than chemical attributes. The task instead requires \emph{permutation invariance}. Let a spectrum be represented as a feature matrix $X \in \mathbb{R}^{n \times d}$, and let $\Pi$ denote any $n \times n$ permutation matrix that reorders peaks. A valid model $f$ should satisfy
\[
f(\Pi X) = f(X), \qquad \forall \Pi.
\]
This motivates architectures that operate directly on unordered sets and avoid sequence-specific inductive biases.

\subsection{Set Transformer Primitives}
\label{subsec:set_primitives}

To explicitly model the permutation-invariant nature of NMR data, we leverage the Set Transformer~\cite{lee2019set} framework. As shown in Figure~\ref{fig:set_transformer_blocks}, this architecture consists of attention-based blocks designed to process unordered sets.

\noindent\textbf{Multihead Attention Block (MAB).} 
MAB is the fundamental building block that generalizes standard attention to interactions between two sets. As depicted in Figure~\ref{fig:set_transformer_blocks}(a), it takes two input sets $X \in \mathbb{R}^{n \times d}$ and $Y \in \mathbb{R}^{m \times d}$ and performs attention:
\begin{gather}
H = \mathrm{LayerNorm}(X + \mathrm{MultiHeadAttn}(X, Y, Y)) \label{eq:mab_step1} \\
\mathrm{MAB}(X, Y) = \mathrm{LayerNorm}(H + \mathrm{FFN}(H)) \label{eq:mab_step2}
\end{gather}
To explicitly connect this to the standard attention formulation $\mathrm{Attention}(Q, K, V)$, MAB maps the input set $X$ to the Queries ($Q$) and the set $Y$ to both the Keys ($K$) and Values ($V$). This mechanism allows the elements in set $X$ to query and aggregate information from set $Y$.

\noindent\textbf{Set Attention Block (SAB).} 
SAB corresponds to the standard self-attention mechanism, defined as $\mathrm{SAB}(X) = \mathrm{MAB}(X, X)$. In this case, the mapping becomes $Q=K=V=X$, meaning the set attends to itself. While capturing pairwise interactions among all elements, SAB incurs a quadratic computational complexity of $\mathcal{O}(n^2)$, which limits scalability for high-resolution spectra.

\noindent\textbf{Induced Set Attention Block (ISAB).} 
To overcome the $\mathcal{O}(n^2)$ bottleneck, ISAB (Figure~\ref{fig:set_transformer_blocks}(c)) introduces a small set of $m$ learnable \textit{inducing points} $I$ (where $m \ll n$). Instead of computing attention directly between all input elements, ISAB uses $I$ as a low-rank approximation:
\begin{gather}
H = \mathrm{MAB}(I, X) \label{eq:isab_step1} \\
\mathrm{ISAB}_m(X) = \mathrm{MAB}(X, H) \label{eq:isab_step2}
\end{gather}
In the first step ($Q=I, K=V=X$), the inducing points summarize global features from the input. In the second step ($Q=X, K=V=H$), the input $X$ attends only to these summarized features. This mechanism effectively reduces the computational complexity from $\mathcal{O}(n^2)$ to $\mathcal{O}(nm)$.

\noindent\textbf{Pooling by Multihead Attention (PMA).} 
To aggregate a variable-sized set into a fixed-size representation, PMA uses a learnable seed matrix $S \in \mathbb{R}^{k \times d}$ as the query:
\[
\mathrm{PMA}_k(X) = \mathrm{MAB}(S, X).
\]
Since the query $S$ is fixed and independent of the input order, the pooled output is permutation invariant.

\noindent\textbf{{Summary.}}
In summary, Set Transformer constructs expressive set functions by stacking permutation-equivariant attention layers (MAB, SAB, or ISAB) followed by permutation-invariant pooling (PMA). These components form a flexible and theoretically grounded toolkit for modeling unordered sets, which we will adopt in the subsequent methodology section. Figure~\ref{fig:set_transformer_blocks} illustrates the core attention primitives used in this framework.
\section{Methodology}
\label{sec:methodology}

\begin{figure*}
\centering\includegraphics[width=0.85\linewidth]{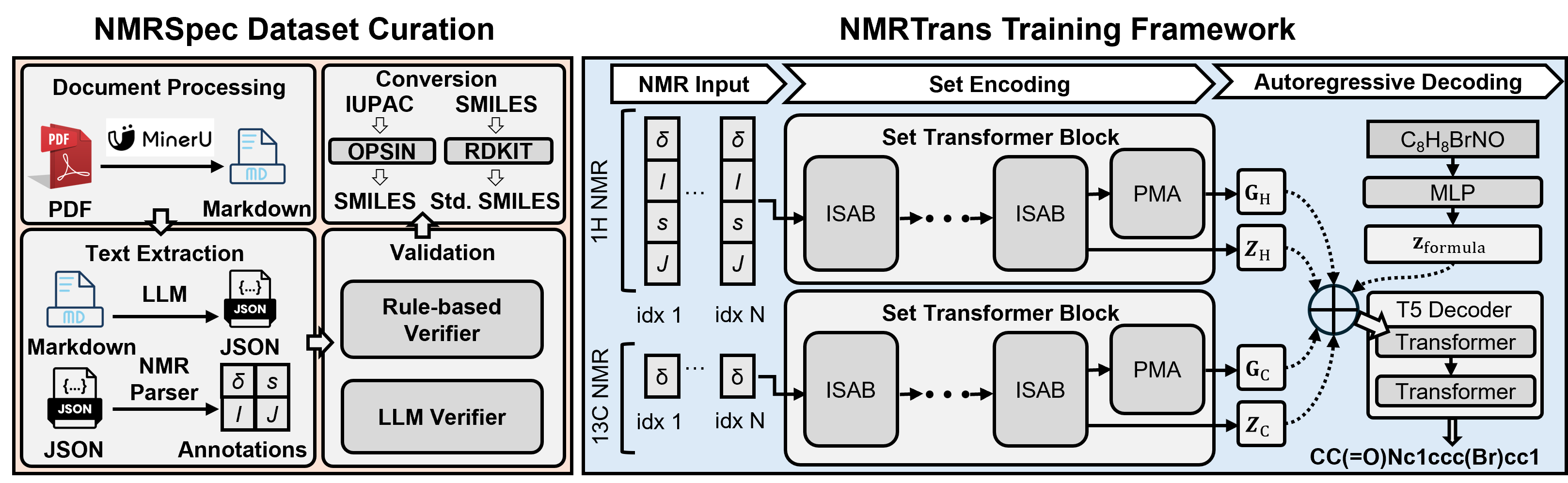}
    \caption{Left: Curation of our \dataset. Right: Pipeline of \mname: Set Transformer encoders for $^1$H/$^{13}$C peak sets, feature concatenation (optionally with molecular formula), and a T5 decoder for SMILES generation.}
    \label{fig:main}
\end{figure*}

\subsection{Problem Formulation}
\label{subsec:problem_formulation}

We consider the task of NMR-based molecular structure elucidation, where the goal is to infer a molecular structure from NMR spectra. Given one-dimensional peak sets from $^1$H and $^{13}$C NMR spectra, denoted as $\mathcal{P}_{\text{H}}$ and $\mathcal{P}_{\text{C}}$ respectively, optionally accompanied by a molecular formula $\mathcal{F}$, the model predicts a molecular structure represented as a canonical SMILES sequence $\mathbf{x} = (x_1, \ldots, x_L)$. This task can be formulated as learning the conditional distribution
\begin{equation}
    p(\mathbf{x} \mid \mathcal{P}_{\text{H}}, \mathcal{P}_{\text{C}}, \mathcal{F}).
\end{equation}



A key challenge of this inverse problem lies in the representation of NMR spectra. In contrast to natural language, NMR spectra do not possess meaningful sequential structure: the index of peaks is arbitrary. Instead, chemical information is encoded in the physical coordinate($\delta$) and other peak attributes, such as splitting patterns and coupling statistics, rather than by the sequential arrangement of input elements. 

Consequently, the end-to-end mapping must be \textit{permutation invariant} with respect to peak ordering.
Let $X_{\text{H}}$ and $X_{\text{C}}$ denote feature matrices of the $^1$H and $^{13}$C peak sets, and let $\Pi_{\text{H}}$ and $\Pi_{\text{C}}$ be permutation matrices that reorder peaks.
The model $f_{\theta}$ should satisfy
\begin{equation}
    f_{\theta}(\Pi_{\text{H}} X_{\text{H}}, \Pi_{\text{C}} X_{\text{C}}, \mathcal{F}) = f_{\theta}(X_{\text{H}}, X_{\text{C}}, \mathcal{F}), \quad \forall \Pi_{\text{H}}, \Pi_{\text{C}},
\end{equation}
ensuring identical molecular predictions for spectra differing only in peak ordering. This invariance is achieved through a composition of \textit{permutation-equivariant} intermediate layers (e.g., ISAB) followed by \textit{permutation-invariant} aggregation operations (e.g., PMA). Our architecture enforces this property at the architectural level rather than relying on data augmentation.

\subsection{Spectrum-aware Feature Engineering}
\label{subsec:feature_engineering}

We represent NMR spectra as collections of peaks with physically meaningful acquisition attributes. For a $^1$H NMR spectrum, each peak $p_i^{\text{H}}$ is characterized by
\begin{equation}
    p_i^{\text{H}} =
    \big(
        \delta_i,\ 
        I_i,\ 
        s_i,\ 
        \mathbf{j}_i
    \big),
    \label{eq:hnmr_peak}
\end{equation}
where $\delta_i \in [0,12]$ ppm denotes the chemical shift (defined as the centroid of the multiplet), $I_i$ the integration intensity, $s_i$ the splitting pattern (e.g., doublet, triplet). The feature vector $\mathbf{j}_i = [J_{i,1}, \dots, J_{i,k}] \in \mathbb{R}^k$ contains the associated $J$-coupling constants defining the splitting structure.
These constants are encoded via sorting followed by zero-padding to a fixed length (e.g., $k=6$), capturing the local proton connectivity environment.

For $^{13}$C NMR, due to broadband decoupling, each signal $p_j^{\text{C}}$ appears as a singlet and is described solely by its chemical shift:
\begin{equation}
    p_j^{\text{C}} = \delta_j, \quad \delta_j \in [0,220]\,\text{ppm}
    \label{eq:cnmr_peak}
\end{equation}
as integration values are not quantitatively reliable under broadband decoupling conditions~\cite{claridge2016high}. The wide dispersion of $^{13}$C chemical shifts provides complementary carbon-type information that is largely absent from $^1$H spectra.

Accordingly, the input to our model consists of two unordered sets of signal features:
\begin{equation}
    \mathcal{P}_{\text{H}} = \{p_1^{\text{H}}, \dots, p_{N_{\text{H}}}^{\text{H}}\}, \quad
    \mathcal{P}_{\text{C}} = \{p_1^{\text{C}}, \dots, p_{N_{\text{C}}}^{\text{C}}\},
    \label{eq:peak_sets}
\end{equation}
where $N_{\text{H}}$ and $N_{\text{C}}$ denote the number of peaks in the $^1$H and $^{13}$C spectra, respectively.
These sets are embedded into continuous feature matrices $X_{\text{H}} \in \mathbb{R}^{N_{\text{H}} \times d}$ and $X_{\text{C}} \in \mathbb{R}^{N_{\text{C}} \times d}$ to serve as inputs for the Set Transformer encoders.

\subsection{Encoder: Hierarchical Representation for NMR Spectra}
\label{subsec:set_transformer}

To model NMR spectra in a manner consistent with their physical properties, we design a permutation-equivariant encoder inspired by the Set Transformer architecture~\cite{lee2019set}. The key requirement is that the encoder operates on the spectral feature matrices $X_{\text{H}}$ and $X_{\text{C}}$ based solely on their chemical attributes, while remaining invariant to any arbitrary ordering.

\noindent\textbf{Permutation-equivariant encoding.}
As established in Section~\ref{subsec:problem_formulation}, intermediate spectral representations must satisfy permutation equivariance ($f(\Pi X) = \Pi f(X)$) to preserve peak-feature correspondence, while the final pipeline output must be permutation invariant. Unlike standard Transformers that rely on positional encodings, our architecture operates directly on the unordered feature matrix $X \in \{X_{\text{H}}, X_{\text{C}}\}$. By processing peaks as a set rather than a sequence, the encoder inherently guarantees equivariant intermediate representations and, through subsequent invariant aggregation, invariant final predictions.

\noindent\textbf{Induced Set Attention as inductive bias.}
Beyond permutation equivariance, effective spectral modeling requires controlling the interaction structure among peaks. Vanilla self-attention computes pairwise interactions across all $N$ peaks, which becomes increasingly diffuse as $N$ grows and as chemically irrelevant or noisy peaks are introduced. To address this, we employ \emph{Induced Set Attention Blocks} (ISAB), which introduce a learnable bottleneck that constrains information flow through a small set of inducing points.

Formally, following the definition in Eq.~\ref{eq:isab_step1} and Eq.~\ref{eq:isab_step2}, an ISAB layer consists of two consecutive multi-head attention operations:
\begin{equation}
H = \mathrm{MAB}(I, X), \qquad Z = \mathrm{MAB}(X, H),
\label{eq:isab}
\end{equation}
where $X \in \mathbb{R}^{N \times d}$ denotes the input peak embeddings, and $I \in \mathbb{R}^{m \times d}$ denotes $m \ll N$ learnable inducing points. This two-stage structure encodes a strong inductive bias:
\begin{itemize}[leftmargin=*]
    \item In the first stage ($I \rightarrow X$), inducing points attend to the entire spectrum, compressing the peak set into a small number of latent spectral concepts.
    \item In the second stage ($X \rightarrow H$), individual peaks attend to these latent concepts, refining their representations while suppressing spurious peak--peak interactions.
\end{itemize}
By mediating interactions through $m$ inducing points, ISAB enforces a hierarchical representation that may facilitate coarse-to-fine reasoning during spectral interpretation.

\noindent\textbf{Set Transformer block architecture.}
For each NMR modality, we process the corresponding feature matrix through a Set Transformer block composed of $L$ stacked ISAB layers, followed by Pooling by Multihead Attention (PMA):
\begin{equation}
Z = \mathrm{ISAB}_L(X), \qquad G = \mathrm{PMA}_k(Z).
\label{eq:set_transformer_block}
\end{equation}
Here, $Z \in \mathbb{R}^{N \times d}$ encodes refined peak-level features (corresponding to $Z_{\text{H}}$ or $Z_{\text{C}}$ in Eq.~\ref{eq:fusion}), while $G \in \mathbb{R}^{k \times d}$ captures global spectral context (corresponding to $G_{\text{H}}$ or $G_{\text{C}}$). We set $k=4$ to obtain a multi-faceted global representation. ISAB is permutation-equivariant by construction, while PMA is permutation-invariant with respect to its input set for any $k$.

Although the peak-level representation $Z$ remains permutation-equivariant, the subsequent cross-attention in the decoder aggregates these features into a permutation-invariant context vector (Lemma 3, Appendix~\ref{app:permutation}), ensuring strict order independence for final molecular predictions. This composition of equivariant feature extraction followed by invariant aggregation forms the theoretical foundation of our architecture. Unless otherwise specified, we set the number of inducing points to $m=32$ and stack $L=6$ ISAB layers, which we found to provide a favorable trade-off between abstraction capacity and retention of chemically relevant detail.

\subsection{Multi-Modal Fusion}
\label{subsec:fusion}

Each input modality is encoded independently using its corresponding spectral encoder as described in Section~\ref{subsec:set_transformer}. Specifically, for the feature matrices $X_{\text{H}}$ and $X_{\text{C}}$, we obtain their respective peak-level and global representations:
\[
(Z_{\text{H}}, G_{\text{H}}) = f_{\text{H}}(X_{\text{H}}), \qquad
(Z_{\text{C}}, G_{\text{C}}) = f_{\text{C}}(X_{\text{C}}).
\]

When available, the molecular formula $\mathcal{F}$ (represented as a string, e.g., ``C10H16O2'') is tokenized into characters, embedded via a lookup table, and encoded by a multilayer perceptron into a global constraint vector $G_{\text{F}} \in \mathbb{R}^{1 \times d}$. Importantly, each modality ($^1$H NMR, $^{13}$C NMR, and molecular formula) is optional and can be independently masked, allowing the model to operate flexibly under different experimental settings.

To integrate both local (peak-level) and global (spectrum-level) cues, the final encoder output is formed by concatenating all available representations along the sequence dimension:
\begin{equation}
H_{\text{enc}} = \left[ G_{\text{C}}, Z_{\text{C}}, G_{\text{H}}, Z_{\text{H}}, G_{\text{F}} \right],
\label{eq:fusion}
\end{equation}
where representations corresponding to unavailable modalities are omitted. The concatenation order is arbitrary and does not affect final predictions due to the permutation invariance of cross-attention (Lemma 3). By treating each global vector in $G$ as an independent set element, this fusion strategy preserves complementary spectral information while maintaining permutation invariance. Critically, no positional encodings are injected at any stage—from spectral input through fusion to decoder cross-attention—ensuring strict adherence to the unordered nature of NMR peak sets. The fused representation $H_{\text{enc}}$ is subsequently consumed by the decoder for molecular structure generation.

\subsection{Autoregressive Decoder}
\label{subsec:decoder}

The decoder generates the molecular SMILES sequence autoregressively conditioned on the fused encoder representation $H_{\text{enc}}$.
Following~\cite{vita2024leveraging}, we adopt the T5 architecture~\cite{raffel2020exploring} as the backbone, while introducing targeted modifications to accommodate the unordered nature of NMR spectral inputs.

\noindent\textbf{SMILES Vocabulary.}
Molecular structures are represented using canonical SMILES strings and decoded from a chemically informed output vocabulary.
We employ a custom regular expression-based tokenizer to define atomic-level tokens that preserve chemically meaningful units.
Specifically, the vocabulary includes multi-character atomic symbols (e.g., \texttt{Cl}, \texttt{Br}), bracketed stereochemical expressions (e.g., \texttt{[C@H]}, \texttt{[N+]}), bond types, ring-closure digits, and branch delimiters.
This design avoids the fragmentation of chemical entities inherent to character-level tokenization (e.g., preventing \texttt{Cl} from being split into \texttt{C} and \texttt{l}), while ensuring exact reversibility between token sequences and canonical SMILES strings.
In addition to chemical tokens, the vocabulary contains special symbols \texttt{[BOS]}, \texttt{[EOS]}, and \texttt{[PAD]}.

\noindent\textbf{Training Objective.}
Given the encoder representation $H_{\text{enc}}$, the decoder models the conditional probability of a target SMILES sequence $\mathbf{x}$ in an autoregressive manner:
\begin{equation}
    p(\mathbf{x} \mid H_{\text{enc}}) = \prod_{t=1}^{L} p(x_t \mid \mathbf{x}_{<t}, \text{CrossAttn}(q_t, H_{\text{enc}})).
\end{equation}
Training minimizes the cross-entropy loss between predicted token distributions and the ground-truth sequence $\mathbf{x}^*$:
\begin{equation}
    \mathcal{L} = -\sum_{t=1}^{L} \log p(x_t^* \mid \mathbf{x}_{<t}^*, H_{\text{enc}}),
\end{equation}
where the dependency on the spectral information is mediated through the cross-attention mechanism.

\noindent\textbf{Removal of Positional Bias.}
In standard T5 decoders, relative positional biases are introduced in attention layers, implicitly encoding order information in the encoder states.
However, for set-structured inputs such as NMR peaks, any imposed ordering is arbitrary and may introduce spurious dependencies.
To eliminate this effect, we explicitly \textbf{remove all positional biases} from the cross-attention modules.
As a result, attention weights depend solely on the content compatibility between the decoding state and the chemical attributes encoded in $H_{\text{enc}}$ (e.g., chemical shift values), rather than on the input indices.

\noindent\textbf{Permutation Invariance via Cross-Attention.}
As a direct consequence of removing positional biases, the decoder achieves permutation invariance with respect to the encoder states.
Formally, for a decoder query $q_t$ and encoder states $H_{\text{enc}} = \{\mathbf{h}_j\}_{j=1}^M$ (where $M$ is the total length of the fused sequence), cross-attention is computed as
\begin{equation}
    \text{CrossAttn}(q_t, H_{\text{enc}}) = \sum_{j=1}^{M} \alpha_{tj} \mathbf{h}_j, 
    \quad \alpha_{tj} = \frac{\exp(q_t^\top \mathbf{h}_j / \sqrt{d})}{\sum_{k=1}^{M} \exp(q_t^\top \mathbf{h}_k / \sqrt{d})}.
    \label{eq:cross_attn}
\end{equation}
If the encoder states are permuted, the attention weights are recomputed accordingly and remain paired with their corresponding values.
Combined with the commutativity of summation in the softmax denominator, this guarantees that the resulting weighted sum is invariant to the input ordering.
A rigorous proof of this property is provided in Lemma~3 (Appendix~\ref{app:permutation}).
\section{Dataset}
\label{sec:dataset}

In this paper, we present \textbf{\dataset}, a large-scale corpus of \emph{experimental} molecular spectroscopy records mined from chemical journals. The curation pipeline is illustrated in the 'NMRSpec Dataset Curation' block in Figure~\ref{fig:main}.

\noindent\textbf{Motivation: Bridging the Simulation--Experiment Gap.} 
The development of robust, generalizable AI models for NMR structure elucidation critically depends on access to large-scale, high-quality experimental data. However, publicly available experimental NMR repositories remain scarce. For instance, widely used databases like NMRShiftDB2~\cite{Steinbeck2003} contain only $\sim$50k entries, falling far short of the data scale required for training modern deep neural networks~\cite{Wang2025}. 
Consequently, prior approaches have largely relied on computationally simulated spectra. While scalable, simulated spectra inevitably deviate from experimental measurements due to complex solvent effects, impurities, and magnetic interactions. This discrepancy creates a substantial \emph{simulation--experiment domain gap}, leading to severe performance degradation when models trained on synthetic data are applied to real-world tasks. 
Notably, Kluger et al.~\cite{nmr2struct} reported that the accuracy of their generative framework dropped precipitously from 69.6\% to 33.0\% when evaluated on a set of 106 experimental $^1$H and $^{13}$C spectra compared to simulated benchmarks. 
To address this bottleneck and enable reliable real-world application, we introduce \textbf{\dataset} to provide the necessary large-scale experimental supervision that prior synthetic datasets lack.

\noindent\textbf{Data Curation Pipeline.} 
We developed a systematic five-stage pipeline to harvest structured molecular and spectroscopic data from the Supporting Information (SI) of peer-reviewed chemistry journals:
(1) \textbf{Document Parsing:} Raw PDF documents were processed using \texttt{MinerU} to extract textual content into Markdown format while preserving high-resolution images.
(2) \textbf{Spectroscopic Extraction:} We employed Large Language Models (LLMs) to identify spectroscopy-related sections, followed by a rule-based parser using regular expressions to precisely extract NMR parameters (e.g., chemical shifts, $J$-couplings, multiplicities).
(3) \textbf{Structure Standardization:} Chemical names (IUPAC) were converted to SMILES strings using OPSIN~\cite{Terakita2005} and canonicalized via RDKit to ensure unique representations.
(4) \textbf{Image Analysis:} For documents containing spectral images, vision-language models were utilized to extract metadata and link images to their corresponding textual records.
(5) \textbf{Validation:} Rigorous deduplication and chemical validity checks were applied to ensure data integrity.

\noindent\textbf{Corpus Statistics.} 
The resulting raw corpus, which we refer to as the \textbf{\dataset Corpus}, covers contemporary chemistry literature from \textbf{2013 to 2025}. 
Processing \textbf{62,071} documents yielded a massive collection of \textbf{2,144,492} spectroscopic records across \textbf{681,990} unique molecules. 
Beyond the scale, a key contribution is the diversity of analytical techniques: while $^1$H NMR ($\sim$29.2\%) and $^{13}$C NMR ($\sim$24.5\%) constitute the majority, the corpus also encompasses Mass Spectrometry ($\sim$25.2\%) and IR ($\sim$11.7\%), providing a rich resource for future multi-modal research. 
Comprehensive statistics are detailed in \textbf{Appendix~\ref{app:dataset_details}}.

\noindent\textbf{Benchmark Construction.} 
From this broad corpus, we constructed a high-quality benchmark tailored specifically for the structure elucidation task.
We applied a strict filtering protocol to select organic molecules (containing B, Br, C, Cl, F, H, I, N, O, P, S, or Si) that possess \textbf{paired} experimental $^1$H and $^{13}$C NMR spectra, as our method relies on the fusion of both modalities.
We further filtered out overly complex samples (e.g., $>60$ peaks or SMILES length $>80$) to ensure training stability. 
Finally, we adopted a random splitting strategy to partition this filtered dataset into training, validation, and test sets with a ratio of 8:1:1. This resulted in a final benchmark of approximately \textbf{240,000 samples} (190k training, 25k validation, and 25k test), which serves as the experimental basis for all subsequent evaluations.
\section{Experiments}
\label{sec:experiment}

\subsection{Experimental Setup}
\label{sec:experiment_setup}

\noindent\textbf{Baselines.}
We evaluate \mname\ against representative baselines spanning dominant paradigms for NMR-based structure elucidation.
Accordingly, we select two generative transformer-based models: \textbf{NMR2Struct}~\cite{nmr2struct} and \textbf{NMRMind}~\cite{xue2025nmrmind}.
Both methods \textbf{discretize} spectra by encoding chemical shift and signal intensities into token sequences, allowing transformers to attend to discrete spectral inputs for direct SMILES decoding.
Distinguished from these generative approaches, \textbf{NMR-Solver}~\cite{nmrsolver} represents the retrieval-and-optimization paradigm.
It adopts a database-driven strategy that performs nearest-neighbor matching against a simulated spectral library, followed by fragment-level optimization to \textbf{refine} the structure.

\noindent\textbf{Evaluation metrics.}
We evaluate structure elucidation performance using three complementary metrics.
\textbf{Sequence-level accuracy} measures whether the predicted \textbf{canonical} SMILES exactly matches the reference structure, reflecting end-to-end correctness.
\textbf{Token-level accuracy} computes the fraction of correctly predicted SMILES tokens, capturing partial structural agreement.
In addition, we report \textbf{Tanimoto similarity} between predicted and ground-truth molecules based on RDKit Morgan fingerprints (radius=2, nBits=2048, without chirality) to quantify structural similarity when exact matches are not achieved.
All metrics are reported at Top-$k$ ($k \in \{1,5,10\}$).

\noindent\textbf{Implementation details.}
All methods are evaluated on the same experimental NMR test set.
For all generative models, including ours and transformer-based baselines, we employ beam search with a beam size of 10 to produce Top-$k$ candidate SMILES.
To ensure fair comparison under identical data distributions, all baseline models are trained from scratch on our experimental training set using their official implementations, including spectral encoding schemes, molecular formula embeddings, and tokenizer vocabularies.
For the retrieval-based baseline NMR-Solver, we follow its three-stage protocol: initially retrieving the top-100 candidates via FAISS vector similarity, re-ranking them based on explicit $^1$H and $^{13}$C spectral matching scores, and finally filtering candidates based on molecular formula and structural validity to obtain the final top-$k$ predictions.

\begin{table*}[!t]
\centering
\setlength{\tabcolsep}{3.5pt}
\caption{
Quantitative comparison on experimental $^1$H and $^{13}$C NMR spectra from \dataset.
Best results are \textbf{bolded}. 
Results report mean$_{\pm\text{std}}$ over three independent runs; 
NMR-Solver is deterministic (std = 0.00).
}
\label{tab:main_results}
\begin{tabular}{@{}lccccccccc@{}}
\toprule
\multirow{2}{*}{Method}
& \multicolumn{3}{c}{Top-1}
& \multicolumn{3}{c}{Top-5}
& \multicolumn{3}{c}{Top-10} \\
\cmidrule(lr){2-4} \cmidrule(lr){5-7} \cmidrule(lr){8-10}
& Seq Acc & Tok Acc & Tanimoto
& Seq Acc & Tok Acc & Tanimoto
& Seq Acc & Tok Acc & Tanimoto \\
\midrule
NMR2Struct~\cite{nmr2struct}
& $3.13_{\pm0.02}$  & $30.54_{\pm0.04}$ & $27.30_{\pm0.09}$
& $4.76_{\pm0.03}$  & $41.18_{\pm0.05}$ & $38.97_{\pm0.04}$
& $5.53_{\pm0.04}$  & $45.69_{\pm0.05}$ & $42.41_{\pm0.01}$ \\
NMR-Solver (search-only)~\cite{nmrsolver}
& $22.35_{\pm0.00}$ & $29.22_{\pm0.00}$ & $34.72_{\pm0.00}$
& $23.81_{\pm0.00}$ & $30.16_{\pm0.00}$ & $35.23_{\pm0.00}$
& $23.83_{\pm0.00}$ & $30.17_{\pm0.00}$ & $35.26_{\pm0.00}$ \\
NMRMind~\cite{xue2025nmrmind}
& $37.33_{\pm0.01}$ & $63.51_{\pm0.03}$ & $72.71_{\pm0.04}$
& $41.96_{\pm0.12}$ & $68.33_{\pm0.05}$ & $77.05_{\pm0.03}$
& $43.33_{\pm0.03}$ & $69.71_{\pm0.02}$ & $78.14_{\pm0.00}$ \\
\mname{} (Ours)
& $\mathbf{42.81_{\pm0.05}}$ & $\mathbf{71.30_{\pm0.04}}$ & $\mathbf{74.62_{\pm0.10}}$
& $\mathbf{58.22_{\pm0.04}}$ & $\mathbf{81.46_{\pm0.01}}$ & $\mathbf{82.52_{\pm0.02}}$
& $\mathbf{61.15_{\pm0.04}}$ & $\mathbf{84.18_{\pm0.02}}$ & $\mathbf{84.41_{\pm0.03}}$ \\
\bottomrule
\end{tabular}
\end{table*}

\subsection{Overall Performance} 
\label{sec:overall_performance}

Table~\ref{tab:main_results} compares \mname{} against recent methods on experimental $^1$H and $^{13}$C NMR spectra from \dataset.
\mname{} achieves the highest Top-1 sequence accuracy (42.81\%), outperforming the strongest baseline NMRMind (37.33\%) by 5.48\%.
The performance gap widens under relaxed evaluation: at Top-5 and Top-10, \mname{} exceeds NMRMind by 16.26\% and 17.82\% in sequence accuracy, respectively (58.22\% vs.\ 41.96\% and 61.15\% vs.\ 43.33\%).

\mname{} also achieves the highest token accuracy and Tanimoto similarity across all Top-$k$ settings.
The absolute improvement in Tanimoto similarity increases from 1.91 percentage points at Top-1 to 6.27 percentage points at Top-10, indicating progressively better topological similarity in the candidate lists.
These results demonstrate that \mname{} consistently outperforms existing methods.

\subsection{Detailed Analysis}
\label{sec:detailed_analysis}

We conduct a series of fine-grained analyses to evaluate the robustness of \mname{} against baseline NMRMind across different evaluation regimes. Figure~\ref{fig:visualization_results} summarizes these comparative results.

\begin{figure}[t]
    \centering
    \begin{subfigure}{\linewidth}
        \small
        \centering
        \includegraphics[width=0.7\linewidth]{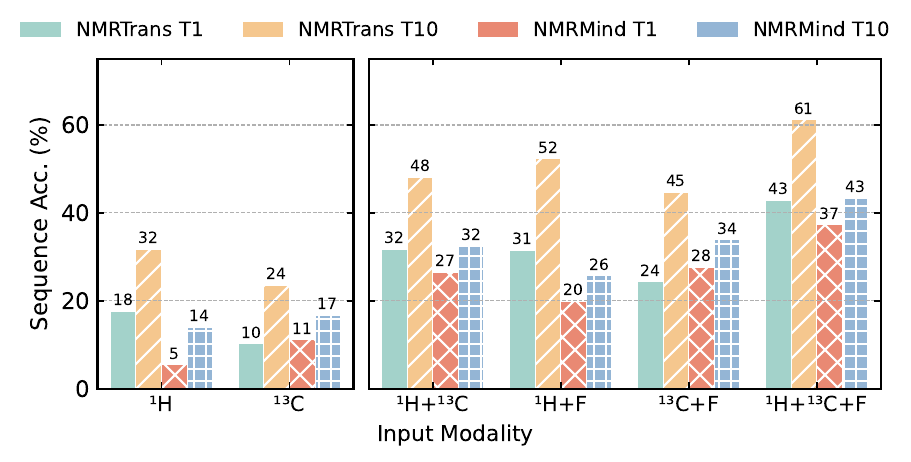}
        \caption{Input Modality Ablation}
        \label{fig:modality}
    \end{subfigure}
    
    \vspace{0.3em} 
    
    \begin{subfigure}{0.48\linewidth}
        \centering
        \includegraphics[width=\linewidth]{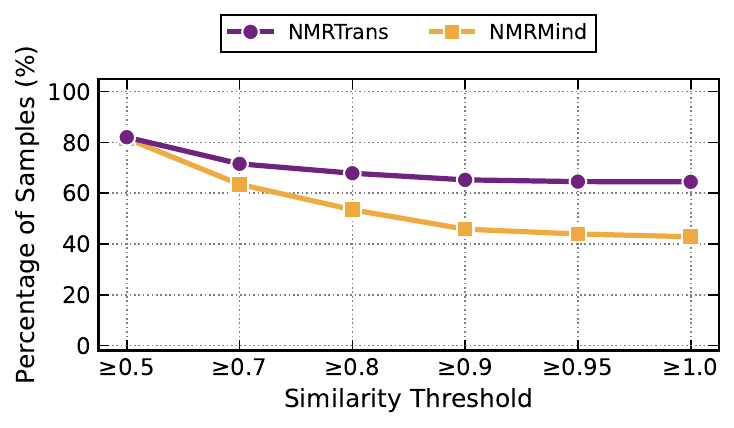}
        \caption{Similarity Thresholds}
        \label{fig:tani_threshold}
    \end{subfigure}
    \hfill 
    \begin{subfigure}{0.48\linewidth}
        \centering
        \includegraphics[width=\linewidth]{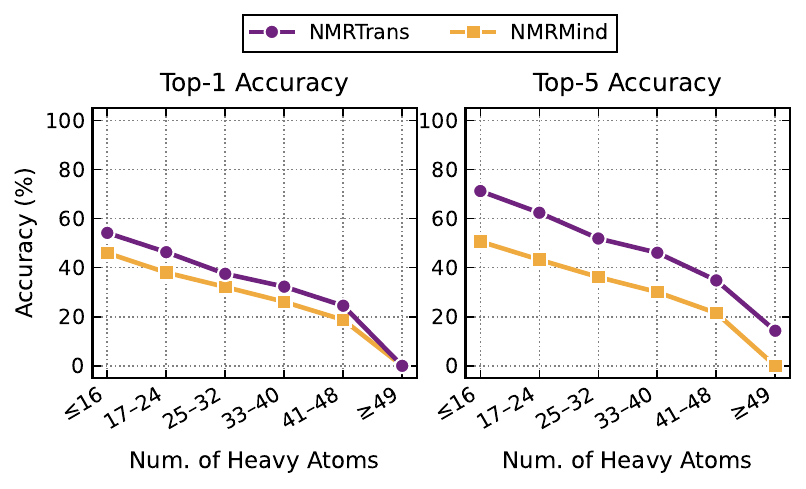}
        \caption{Effect of Molecular Size}
        \label{fig:mol_size}
    \end{subfigure}
    
    \caption{Detailed performance analysis of \mname{} with NMRMind. 
    (a) Top-1 and Top-10 Sequence accuracy comparison under varying input modality combinations (e.g., $^{1}$H, $^{13}$C, Formula).
    (b) Percentage of test samples where the Top-5 predictions meet specific Tanimoto similarity thresholds (x-axis: similarity score $\ge 0.5, 0.7, \dots, 1.0$). 
    (c) Prediction accuracy (Top-1 and Top-5) as a function of molecular complexity (number of heavy atoms).}
    \label{fig:visualization_results}
\end{figure}

\noindent\textbf{Robustness across input modalities.}
Figure~\ref{fig:modality} compares sequence-level accuracy across six input settings.
\mname{} demonstrates significant superiority in all proton-involved scenarios, notably achieving over $3\times$ the Top-1 accuracy of NMRMind in the $^1$H-only setting (18\% vs. 5\%).
In carbon-dominant settings ($^{13}$C, $^{13}$C+F), although the baseline shows slightly higher Top-1 accuracy, \mname{} consistently delivers higher Top-10 recall (e.g., 45\% vs. 34\% for $^{13}$C+F), indicating robust retrieval capability despite signal sparsity.
Crucially, under the full input setting ($^1$H+$^{13}$C+F), \mname{} effectively synergizes all modalities to achieve the best overall performance, surpassing the baseline by 6\% in Top-1 accuracy.

\noindent\textbf{Structural similarity distribution.}
Figure~\ref{fig:tani_threshold} evaluates the structural quality of the generated candidates.
For each test sample, we compute the \textbf{maximum Tanimoto similarity} between the ground truth and the Top-5 predictions, and report the percentage of samples where this maximum similarity exceeds specific thresholds ($\tau$).
While both models exhibit comparable coverage at lower thresholds ($\tau \geq 0.5$), the performance gap widens significantly as the criterion becomes stricter.
Most notably, at the exact reconstruction threshold ($\tau = 1.0$), \mname\ achieves a decisive \textbf{21.6\%} absolute improvement over NMRMind(64.48\% vs. 42.84\%).
This indicates that \mname\ significantly increases the likelihood of recovering the precise ground-truth molecule among the top candidates, rather than merely generating structurally plausible isomers.

\noindent\textbf{Effect of molecular complexity.}
Figure~\ref{fig:visualization_results}c reports the prediction accuracy as a function of molecular size (measured by the number of heavy atoms).
As expected, performance for both models degrades as molecules become larger ($>40$ atoms), reflecting the severe spectral congestion and ambiguity in such complex systems.

However, \mname\ exhibits significantly better scalability.
While the baseline's performance drops precipitously for larger molecules, \mname\ maintains a robust lead across all size bins.
Crucially, in the most challenging regime ($\ge 49$ heavy atoms), whereas NMRMind fails completely (dropping to \textbf{0\%} Top-5 accuracy), \mname\ retains meaningful predictive capacity ($\sim$15\%).
This demonstrates that our permutation-invariant spectral modeling is not only more accurate on average but also far more robust to the complexity of real-world macromolecular structures.

\subsection{Case Study}
\label{sec:case_study}

Figure~\ref{fig:case_study} highlights \mname{}'s robustness in resolving traditionally challenging structural motifs. 
Rows 1--4 demonstrate accurate reconstructions (Top-1 Tanimoto=1.0) across:
(1) \textbf{Long aliphatic chains} (Row 1), distinguishing specific chain lengths despite severe spectral overlap in the $0.5$--$2.0$~ppm region; 
(2) \textbf{Hetero-/Polycyclic systems} (Rows 2--3), capturing subtle chemical shift variations driven by heteroatom placement; 
and (3) \textbf{Heavier molecules} ($\ge 40$ atoms, Row 4), maintaining structural fidelity despite increased spectral congestion.

Row 5 illustrates the model's \textbf{recall capability} in ambiguous scenarios: while the highest-ranked prediction (Top-1) is incorrect, the ground truth is successfully captured within the Top-3 candidates, offering valuable options for expert verification.

\begin{figure}[htbp!]
\centering
\includegraphics[width=\linewidth]{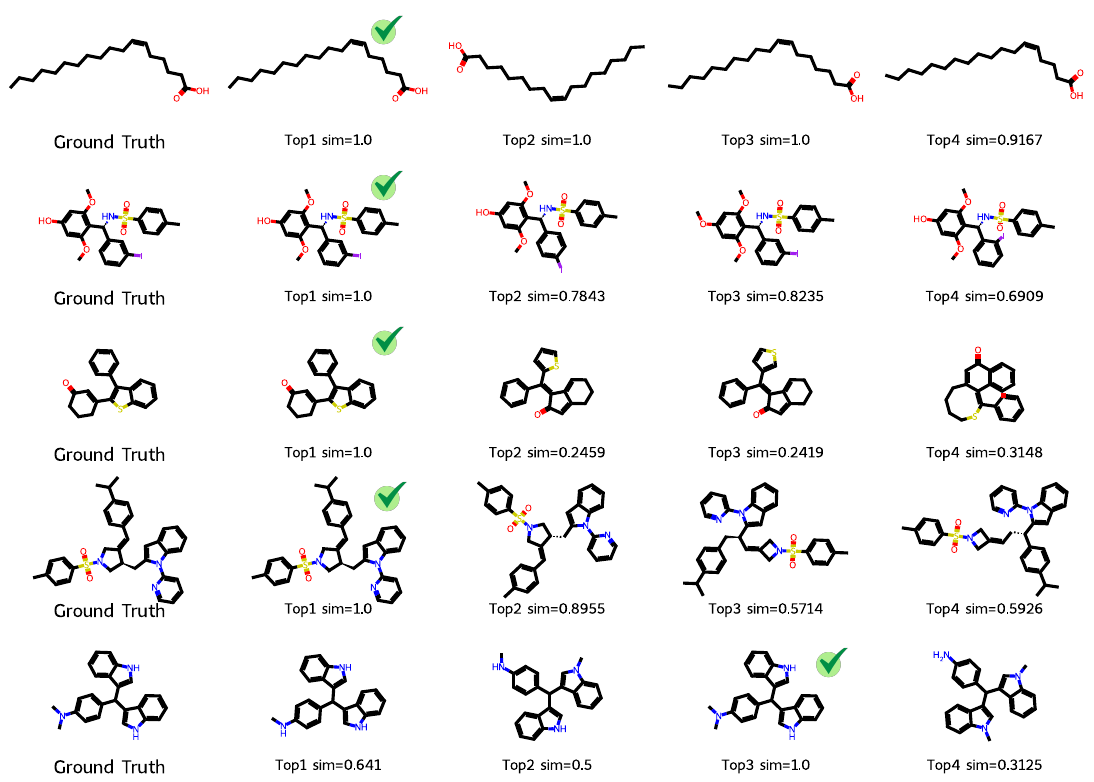}
\caption{\mname{} successfully reconstructs challenging motifs including long aliphatic chains, heterocycles, and heavy molecules ($\geq$40 atoms), with ground-truth structures recovered within the Top-3 predictions for ambiguous cases.}

\label{fig:case_study}
\end{figure}

\subsection{Ablation Study}
\label{sec:ablation}

\begin{table}[htbp]
    \centering
    \small
    \setlength{\tabcolsep}{4pt}
    \caption{
        Ablation study on spectral modalities and molecular formula constraints.
        \textbf{Best} and \underline{second-best} results are highlighted in \textbf{bold} and \underline{underline}, respectively.
        All metrics are reported in percentage (\%).
    }
    \label{tab:ablation_modalities}
    \begin{tabular}{@{}l c ccc cc@{}}
        \toprule
        \multicolumn{1}{c}{Condition} & \multicolumn{1}{c}{Formula}
        & \multicolumn{3}{c}{Seq Acc} 
        & Tanimoto & Token Acc \\
        \cmidrule(lr){3-7}
        & & Top-1 & Top-5 & Top-10 & Top-1 & Top-1 \\
        \midrule
        $^{13}$C NMR & \xmark
        & 10.19 & 20.71 & 23.53 & 42.06 & 44.05 \\
        $^{1}$H NMR & \xmark
        & 17.58 & 28.46 & 31.70 & 53.16 & 54.33 \\
        $^{1}$H+$^{13}$C NMR & \xmark
        & 31.68 & 44.78 & 48.16 & 67.08 & \underline{65.37} \\
        \midrule
        $^{13}$C NMR & \cmark
        & 24.30 & 41.03 & 44.67 & 59.31 & 57.38 \\
        $^{1}$H NMR & \cmark
        & \underline{34.89} & \underline{50.09} & \underline{53.47} & \underline{67.32} & 65.29 \\
        $^{1}$H+$^{13}$C NMR & \cmark
        & \textbf{42.81} & \textbf{58.22} & \textbf{61.15} & \textbf{74.62} & \textbf{71.30} \\
        \bottomrule
    \end{tabular}
\end{table}

\noindent\textbf{Effect of input modalities on structure recovery.}
Table~\ref{tab:ablation_modalities} quantifies the impact of spectral modalities and molecular formula constraints.
The full model ($^1$H+$^{13}$C+Formula) achieves the highest Top-1 accuracy (42.81\%), outperforming single-modality baselines by 11.13--32.62\%.
While single-spectrum models are limited by signal sparsity ($^{13}$C: 10.19\%) or lack of backbone context ($^1$H: 17.58\%), combining them boosts accuracy to 31.68\%, confirming that proton connectivity and carbon frameworks provide essential, complementary structural evidence.

Crucially, integrating molecular formula constraints yields consistent absolute gains of \textbf{11.13--17.31}\% across all settings.
By effectively pruning constitutionally impossible isomers, this global constraint significantly reduces the search space, validating the necessity of fusing local spectral features with global compositional priors for accurate elucidation.

\begin{table}[htbp]
\centering
\small
\setlength{\tabcolsep}{4pt}
\caption{
Ablation study on architectural components for NMR peak encoding.
VT: Vanilla Transformer; PE: positional encoding; MAB: Multihead Attention Block; 
PMA: Pooling by Multihead Attention \cite{lee2019set}.
All metrics are reported in percentage (\%).
}
\label{tab:ablation_on_architecture}
\begin{tabular}{@{}lccccc@{}}
\toprule
Model 
& \multicolumn{3}{c}{Seq Acc} 
& Tanimoto & Token Acc \\
\cmidrule(lr){2-6}
& Top-1 & Top-5 & Top-10 & Top-1 & Top-1 \\
\midrule
VT (w/ PE)
& 27.28 & 45.25 & 49.09 & 64.36 & 61.09 \\
VT (w/o PE)
& 33.36 & 48.69 & 52.20 & 68.87 & 65.13 \\
\midrule
\mname{} (w/o PMA)
& \underline{41.05} & \underline{57.28} & \underline{60.50} & \underline{73.84} & \underline{70.08} \\
\mname{} (w/ PMA)
& \textbf{42.81} & \textbf{58.22} & \textbf{61.15} & \textbf{74.62} & \textbf{71.30} \\
\bottomrule
\end{tabular}
\vspace{-3em}
\end{table}

\noindent\textbf{Architectural inductive biases for unordered spectral data.}
\begin{figure}[!htb]
    \centering
    \includegraphics[width=.6\linewidth]{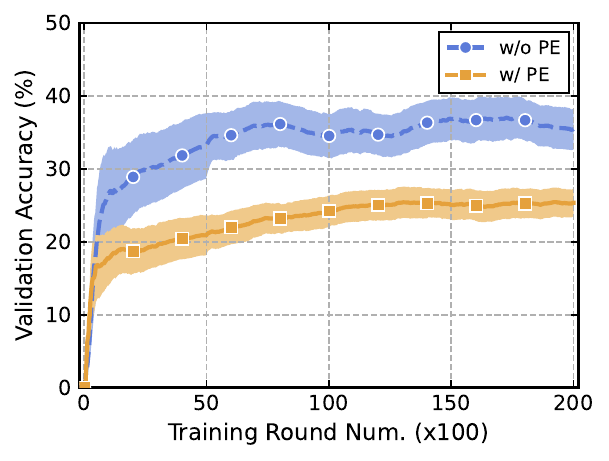}
    \caption{Impact of architectural inductive bias on training dynamics. The curve compares validation accuracy of \mname{} with and without Positional Encodings (PE), demonstrating that removing PE accelerates convergence and improves final performance.}
    \label{fig:pe_ablation}
\end{figure}
\captionsetup[figure]{skip=4pt}
Table~\ref{tab:ablation_on_architecture} evaluates encoder designs for NMR peak sets on \dataset.
Removing positional encodings from a vanilla Transformer improves Top-1 sequence accuracy by 6.08\% (27.28\% $\rightarrow$ 33.36\%), confirming that absolute position indices derived from arbitrary acquisition order, do not provide reliable structural cues for NMR spectra.

Replacing vanilla self-attention with \mname{}'s Multihead Attention Block (MAB/ISAB) yields a further gain of 7.69 percentage points (33.36\% $\rightarrow$ 41.05\%) without any global pooling mechanism.
\textbf{Note that while VT (w/o PE) is already permutation equivariant, the performance gain here highlights the specific advantage of the ISAB architecture.}
By mediating interactions through a small set of inducing points, ISAB creates an information bottleneck that effectively filters spurious pairwise correlations (attention dilution) inherent in standard full-attention mechanisms, thereby refining the spectral representation.

Adding PMA for global feature aggregation provides a marginal gain of 1.76 percentage points (41.05\% $\rightarrow$ 42.81\%), indicating that explicit global context extraction offers secondary refinement beyond the core set-equivariant representation.
These results demonstrate that matching architectural inductive bias to data structure—specifically, permutation equivariance and robust attention mechanisms—is critical for accurate structure elucidation.


\subsection{Generalization to Out-of-Distribution Datasets}
\label{sec:ood_generalization}

To assess robustness beyond our internal test set, we evaluated zero-shot generalization on two external benchmarks: NMRBank~\cite{Wang2025NMRExtractor} and a Multimodal Spectroscopic Dataset (MSD)~\cite{alberts2024unravelingmolecularstructuremultimodal}.

\begin{figure}[htbp]
    \centering
    \begin{subfigure}{0.42\linewidth}
        \centering
        \includegraphics[width=\linewidth]{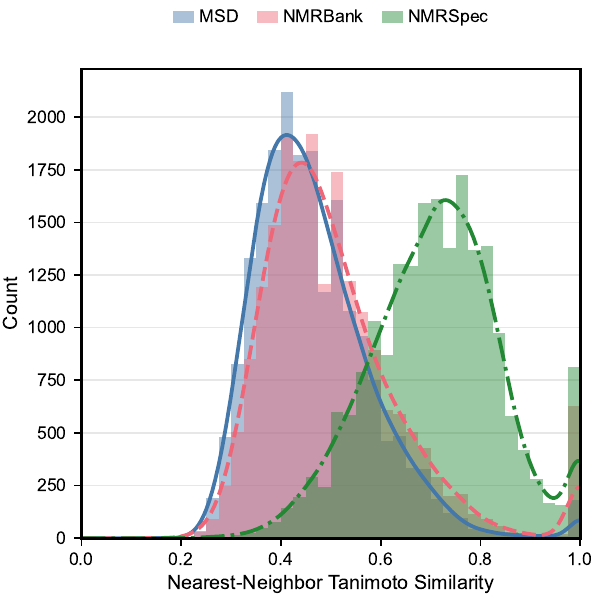}
        \caption{Domain Shift Analysis}
        \label{fig:ood_distribution}
    \end{subfigure}
    \hfill
    \begin{subfigure}{0.45\linewidth}
        \centering
        \includegraphics[width=\linewidth]{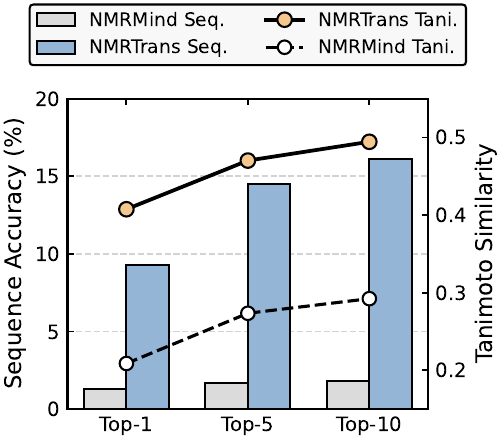}
        \caption{Performance on NMRBank}
        \label{fig:ood_nmrbank}
    \end{subfigure}
    
    \vspace{0.3em}
    
    \begin{subfigure}{\linewidth}
        \centering
        \includegraphics[width=0.9\linewidth]{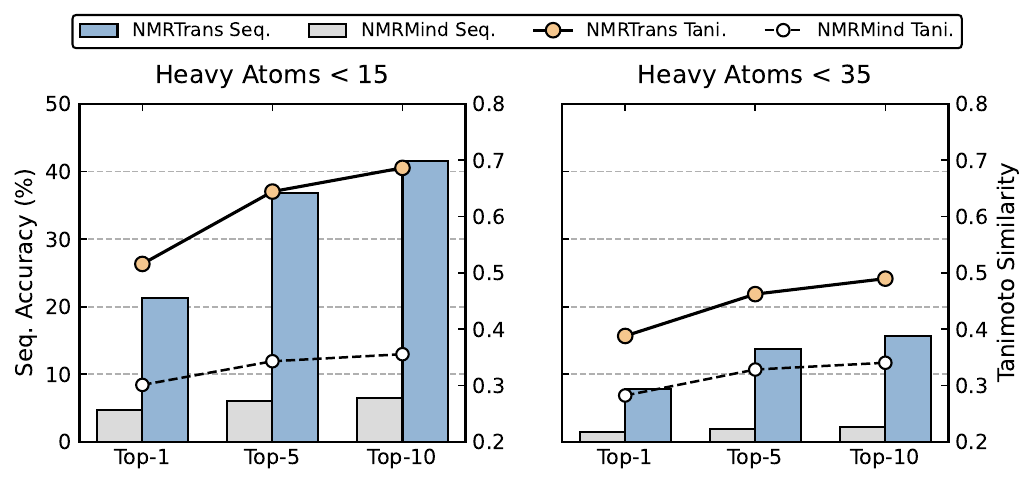}
        \caption{Performance on MSD~\cite{alberts2024unravelingmolecularstructuremultimodal}}
        \label{fig:ood_msd}
    \end{subfigure}
    
    \caption{Zero-shot generalization on Out-of-Distribution (OOD) datasets. 
    (a) Histogram of nearest-neighbor Tanimoto similarity between test samples and the training set. 
    (b) Performance on NMRBank. 
    (c) Performance on MSD, stratified by molecular size.}
    \label{fig:ood_results}
\end{figure}

\noindent\textbf{Domain shift quantification.}
Figure~\ref{fig:ood_distribution} visualizes the distribution of nearest-neighbor Tanimoto similarities between the OOD samples and our training set. 
The distribution skews significantly towards lower similarity values, confirming that these benchmarks contain structurally distinct molecules rather than subsets of the training distribution. This setup rigorously tests true zero-shot generalization capabilities.

\noindent\textbf{Performance on external benchmarks.} 
As shown in Figure~\ref{fig:ood_nmrbank}, \mname{} consistently outperforms the baseline on NMRBank despite the challenging domain shift, maintaining a clear lead in both sequence accuracy and structural similarity.
Results on the MSD benchmark (Figure~\ref{fig:ood_msd}) further validate this trend: \mname{} achieves higher accuracy across both small ($<15$ heavy atoms) and medium ($<35$ heavy atoms) subsets. 
These results collectively demonstrate that our permutation-invariant encoder generalizes effectively to diverse experimental conditions and molecular distributions unseen during training.
\section{Conclusion}
\label{sec:conclusion}

In this work, we addressed the challenge of molecular structure elucidation from NMR spectra under realistic experimental conditions, overcoming the long-standing scarcity of large-scale authentic training data.
To resolve this bottleneck, we introduced \dataset, a comprehensive corpus of experimental NMR spectra mined from chemical literature. This resource enables models to learn directly from real-world spectral distributions, thereby by passing the discrepancies inherent in computational simulations.
Building on \dataset, we proposed \mname, a set-based Transformer framework that explicitly aligns model inductive bias with the unordered and permutation-invariant nature of NMR peak sets.
Through extensive training and evaluation on experimental benchmarks, \mname\ demonstrates substantially improved accuracy and robustness compared to state-of-the-art generative and retrieval-based baselines.
Our results confirm that reliable structure elucidation in real-world scenarios requires the synergy of high-quality experimental supervision and physics-aware architectures. We believe \dataset\ and \mname\ together provide a solid foundation for scalable and practical NMR-based chemical analysis.



\bibliographystyle{ACM-Reference-Format}
\bibliography{main}

\clearpage

\appendix
\section*{Appendix Contents}
\startcontents[appendix]
\printcontents[appendix]{}{1}{}

\section{Related Work}

\subsection{Automated NMR Structure Elucidation}
NMR-based structure elucidation aims to infer molecular structures, typically represented as SMILES strings or molecular graphs, directly from NMR spectra.
Existing approaches to this task span a diverse set of modeling paradigms, differing not only in how molecular structures are represented and how the inverse mapping from spectra to structures is formulated, but also in how NMR spectral inputs are preprocessed and encoded.

Early data-driven methods introduced intermediate representations to reduce the complexity of the inverse problem. Representative substructure-intermediate frameworks first predict the presence of predefined functional groups or fragments from $^1$H and $^{13}$C spectra, and then assemble or rank candidate structures consistent with these predictions \cite{Huang2021}. 
Building on advances in sequence modeling, subsequent work reformulated structure elucidation as an end-to-end conditional generation task, encoding NMR spectra as sequences or sets and autoregressively decoding molecular SMILES using encoder--decoder Transformers \cite{Alberts2023, vita2024leveraging, nmr2struct}. 
To better capture structural uncertainty and molecular symmetries, recent studies have explored diffusion-based generative models that iteratively construct molecular graphs or atomic configurations conditioned on NMR observations \cite{yang2025diffnmr, xiong2025atomicdiffusion}. 
In parallel, search- and optimization-based approaches explicitly explore chemical space by combining spectral embeddings with discrete optimization techniques, such as genetic algorithms or large-scale nearest-neighbor retrieval followed by physics- or model-guided refinement \cite{Mirza2024, nmrsolver}. 
Despite their conceptual differences, a common limitation shared by most existing methods is their heavy reliance on simulated spectra for training, which often leads to degraded performance when applied to experimental NMR data.

\subsection{Simulated and Experimental NMR Datasets}
The effectiveness of learning-based structure elucidation methods is closely tied to the nature of the training data. Existing NMR datasets can be broadly categorized into DFT-simulated, machine-learning-simulated, and experimental collections. 
DFT-based datasets, such as QM9-NMR \cite{Gupta2021}, provide high-fidelity spectra at moderate scale and are widely used for training forward NMR prediction models, while machine-learning-simulated datasets further extend coverage to tens or hundreds of millions of molecules by predicting chemical shifts at scale. Although these simulated resources enable large-scale training and retrieval, they inevitably inherit modeling biases and idealized acquisition assumptions, which can limit their transferability to real experimental settings.

In contrast, experimental NMR datasets capture authentic measurement variability arising from solvent effects, noise, coupling complexity, and instrument-dependent artifacts. Early community-driven resources such as NMRShiftDB~\cite{Steinbeck2003} and more recent curated collections like NMRBank~\cite{Wang2025} provide experimentally grounded molecule--spectrum pairs, while larger-scale efforts have begun to aggregate multimodal experimental data for systematic benchmarking~\cite{alberts2024unravelingmolecularstructuremultimodal}, such as IR. UV, and Raman spectroscopy. 
Despite these advances, large-scale experimental benchmarks remain scarce. 

Recent work has begun to alleviate the lack of large-scale experimental NMR data.
NMRExp~\cite{Wang2025} introduces, to date, the largest collection of experimentally acquired NMR spectra, consisting of over 3.3 million records mined from the supporting information of chemical literature.
Similar to our approach, NMRExp relies on large-scale literature mining rather than simulated spectra, but primarily focuses on dataset construction, with a scale substantially larger than ours.
Concurrently, NMRGym~\cite{fang2026nmrgymcomprehensivebenchmarknuclear} proposes a standardized benchmark built upon high-quality experimental $^1$H and $^{13}$C NMR spectra, integrating curated datasets with unified evaluation protocols to facilitate fair comparison across structure elucidation methods.
Together, these efforts underscore both the growing availability of experimental NMR data and the continued need for methods that can effectively learn directly from experimental spectra.

\subsection{Language Modeling for NMR Elucidation}
\label{subsec:modeling_paradigms}

Existing approaches for NMR structure elucidation typically reformulate the inverse problem as conditional generation, falling into two primary categories:

\noindent\textbf{Autoregressive Models} factorize the probability of a SMILES string $\mathbf{x}$ given a spectrum $\mathbf{y}$ as $p(\mathbf{x}|\mathbf{y}) = \prod_{t} p(x_t | x_{<t}, \mathbf{y})$. These methods often serialize spectral features into a pseudo-sequence to apply standard Encoder-Decoder Transformers, implicitly imposing an arbitrary order on the input peaks. Representative approaches include NMR2Struct~\cite{nmr2struct} and NMRMind~\cite{xue2025nmrmind}, which employ Transformer-based architectures for conditional SMILES generation, as well as T5-based methods~\cite{vita2024leveraging} that serialize 2D NMR peaks into sequences for encoder-decoder generation.

\noindent\textbf{Diffusion Models} generate structures via iterative denoising, learning a distribution $p_\theta(\mathbf{x}_0 | \mathbf{y})$ by reversing a noise process. DiffNMR~\cite{yang2025diffnmr} implements discrete diffusion for generating molecular 2D graphs $G = (V, E)$, perturbing node and edge types separately. In contrast, ChefNMR~\cite{xiong2025atomicdiffusion} employs continuous diffusion in 3D coordinate space, directly generating atomic conformations from noisy coordinates $X_\sigma = X_0 + n$ conditioned on NMR embeddings, capturing geometric symmetries of molecular structures. DiffSpectra~\cite{wang2025diffspectra} explores the potential of diffusion models for molecular structure elucidation from vibrational spectra, including IR, UV-Vis, and Raman spectroscopy. Although these modalities differ fundamentally from NMR, their diffusion-based Transformer architecture demonstrates consistent performance gains, suggesting that such generative frameworks may also be effective for permutation-invariant spectral data like NMR when properly adapted~\cite{yang2025diffnmr}.

\section{Theoretical Proof of End-to-End Permutation Invariance}
\label{app:permutation}

\begin{figure*}[t]
    \centering
    \includegraphics[width=0.9\linewidth]{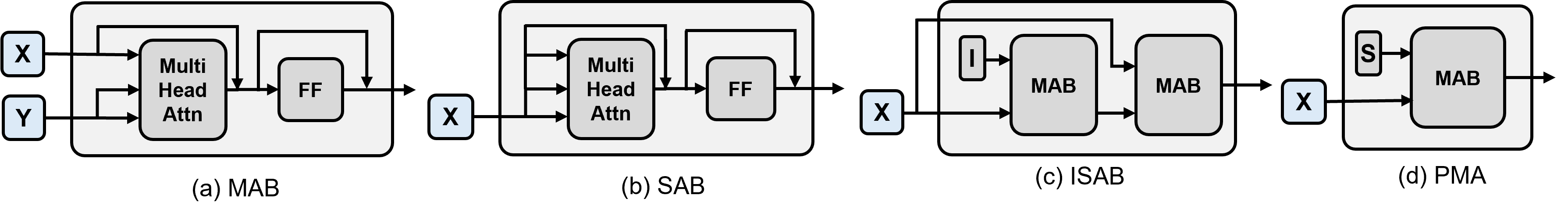}
    \caption{Core building blocks of Set Transformer~\cite{lee2019set}. 
    (a) Multihead Attention Block (MAB) performs attention between two sets.
    (b) Self-Attention Block (SAB) is a special case of MAB with shared inputs.
    (c) Induced Set Attention Block (ISAB) introduces learnable inducing points to reduce computational complexity. 
    (d) Pooling by Multihead Attention (PMA) performs set pooling using learnable seed vectors. }
    \Description{Diagram showing four architectural blocks of Set Transformer arranged horizontally: (a) MAB with two input sets $X$ and $Y$ feeding into multi-head attention producing output; (b) SAB showing a single set $X$ attending to itself; (c) ISAB with input set $X$ attending to a smaller set of inducing points $I$, then back to $X$; (d) PMA with a small set of seed vectors S attending to input set $X$.}
    \label{fig:set_transformer_blocks}
\end{figure*}

This appendix provides formal proofs that our architecture eliminates spurious position-chemistry correlations by design. We establish that the spectral encoder is \textit{permutation equivariant} and the subsequent decoder conditioning is \textit{permutation invariant}.

\subsection{Preliminaries and Definitions}

Let the input NMR spectrum be a set of $n$ peaks represented as a feature matrix $\mathbf{X} \in \mathbb{R}^{n \times d}$. A permutation $\pi$ of indices $\{1, \dots, n\}$ is represented by a permutation matrix $\mathbf{\Pi}_\pi \in \{0,1\}^{n \times n}$ where:
\[ (\mathbf{\Pi}_\pi)_{ij} = \delta_{i,\pi(j)} \]
Applying $\pi$ to the input yields $\mathbf{X}' = \mathbf{\Pi}_\pi \mathbf{X}$. We define the following properties for a function $f$:
\begin{itemize}
    \item \textbf{Permutation Invariance:} $f(\mathbf{\Pi}_\pi \mathbf{X}) = f(\mathbf{X})$
    \item \textbf{Permutation Equivariance:} $f(\mathbf{\Pi}_\pi \mathbf{X}) = \mathbf{\Pi}_\pi f(\mathbf{X})$
\end{itemize}

\subsection{Properties of Attention Primitives}

The core building block is the Multihead Attention Block, $\mathrm{MAB}(\mathbf{Q}, \mathbf{K}, \mathbf{V})$. For brevity, we focus on the core mechanism:
\[ \mathrm{Attn}(\mathbf{Q}, \mathbf{K}, \mathbf{V}) = \mathrm{softmax}\left(\frac{\mathbf{Q}\mathbf{K}^\top}{\sqrt{d}}\right)\mathbf{V} \]
where $\mathrm{softmax}$ operates independently on each row of its input matrix.

\begin{lemma}[Permutation Properties of MAB]
Let $\mathbf{Q} \in \mathbb{R}^{n \times d}$, $\mathbf{K} \in \mathbb{R}^{m \times d}$, and $\mathbf{V} \in \mathbb{R}^{m \times d}$. The MAB satisfies:
\begin{enumerate}
    \item \textbf{Equivariance w.r.t. Query:} $\mathrm{MAB}(\mathbf{\Pi}_\pi \mathbf{Q}, \mathbf{K}, \mathbf{V}) = \mathbf{\Pi}_\pi \mathrm{MAB}(\mathbf{Q}, \mathbf{K}, \mathbf{V})$
    \item \textbf{Invariance under Synchronized Permutation of Keys and Values:} 
    $\mathrm{MAB}(\mathbf{Q}, \mathbf{\Pi}_\pi \mathbf{K}, \mathbf{\Pi}_\pi \mathbf{V}) = \mathrm{MAB}(\mathbf{Q}, \mathbf{K}, \mathbf{V})$
\end{enumerate}
\end{lemma}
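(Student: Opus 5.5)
We prove the two claims first for the core map $\mathrm{Attn}(\mathbf{Q},\mathbf{K},\mathbf{V})=\mathrm{softmax}(\mathbf{Q}\mathbf{K}^\top/\sqrt{d})\mathbf{V}$. We then lift them to the full MAB of Eqs.~\eqref{eq:mab_step1}--\eqref{eq:mab_step2}, which wraps the attention in head-wise linear projections, a residual connection, LayerNorm and a row-wise FFN. The only algebraic facts needed are three. A permutation matrix is orthogonal, so $\mathbf{\Pi}_\pi^\top\mathbf{\Pi}_\pi=\mathbf{I}$. Row-wise softmax commutes with row permutations: $\mathrm{softmax}(\mathbf{\Pi}\mathbf{A})=\mathbf{\Pi}\,\mathrm{softmax}(\mathbf{A})$. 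It also commutes with column permutations: $\mathrm{softmax}(\mathbf{A}\mathbf{\Pi}^\top)=\mathrm{softmax}(\mathbf{A})\mathbf{\Pi}^\top$. The column identity holds because each row's normalizer $\sum_k \exp(a_{ik})$ is a sum over all entries of that row and is therefore unchanged by reordering.

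\textbf{Query equivariance.} Right-multiplication by a projection $\mathbf{W}_Q$ commutes with left-multiplication by $\mathbf{\Pi}_\pi$, so the projected queries are $\mathbf{\Pi}_\pi\mathbf{Q}\mathbf{W}_Q$. The score matrix becomes $\mathbf{\Pi}_\pi(\mathbf{Q}\mathbf{W}_Q)(\mathbf{K}\mathbf{W}_K)^\top/\sqrt{d}$. By the row identity, the softmax output is $\mathbf{\Pi}_\pi$ times the original, and multiplying by the values preserves the left factor. Concatenating heads and applying the output projection are right-multiplications, so they also preserve the left factor $\mathbf{\Pi}_\pi$.

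\textbf{Key/value invariance.} With $\mathbf{K}'=\mathbf{\Pi}_\pi\mathbf{K}$, the score matrix is $\mathbf{S}\mathbf{\Pi}_\pi^\top$. By the column identity its softmax is $\mathrm{softmax}(\mathbf{S})\mathbf{\Pi}_\pi^\top$, and multiplying by $\mathbf{\Pi}_\pi\mathbf{V}\mathbf{W}_V$ gives $\mathrm{softmax}(\mathbf{S})\mathbf{\Pi}_\pi^\top\mathbf{\Pi}_\pi\mathbf{V}\mathbf{W}_V=\mathrm{softmax}(\mathbf{S})\mathbf{V}\mathbf{W}_V$. The keys and values must be permuted by the same $\mathbf{\Pi}_\pi$ for this cancellation, which is why the statement requires a synchronized permutation. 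Equivalently, in the entrywise form of Eq.~\eqref{eq:cross_attn}, each weight $\alpha_{ij}$ moves together with its value $\mathbf{v}_j$, and the sum $\sum_j$ is merely reindexed.

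\textbf{Lifting to the full block.} For query equivariance, the residual term $\mathbf{Q}+\mathrm{Attn}(\cdot)$ becomes $\mathbf{\Pi}_\pi\mathbf{Q}+\mathbf{\Pi}_\pi\mathrm{Attn}=\mathbf{\Pi}_\pi(\mathbf{Q}+\mathrm{Attn})$. LayerNorm and the FFN act row-wise, i.e., $g(\mathbf{\Pi}\mathbf{H})=\mathbf{\Pi}g(\mathbf{H})$, so the left factor propagates to the output. For key/value invariance, the residual stream depends only on $\mathbf{Q}$, and the attention output is unchanged, so every subsequent step is unchanged. The main obstacle is stating the row-wise nature of LayerNorm and the FFN carefully. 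We must confirm that LayerNorm normalizes over the feature dimension rather than across set elements, and that no positional bias enters the score matrix, in line with the removal described in Section~\ref{subsec:decoder}. Either would break the identities: a positional bias adds an index-dependent matrix $\mathbf{B}$ that does not transform as $\mathbf{\Pi}\mathbf{B}$ or $\mathbf{B}\mathbf{\Pi}^\top$. We will state both assumptions explicitly and then conclude.
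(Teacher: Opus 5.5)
Your proposal is correct and follows the same route as the paper. Both arguments use two facts: row-wise softmax commutes with row permutations, which gives query equivariance, and $\mathrm{softmax}(\mathbf{S}\mathbf{\Pi}_\pi^\top)=\mathrm{softmax}(\mathbf{S})\mathbf{\Pi}_\pi^\top$ combined with $\mathbf{\Pi}_\pi^\top\mathbf{\Pi}_\pi=\mathbf{I}$, which gives invariance under a synchronized key/value permutation.

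The paper stops at the core $\mathrm{Attn}$ map ``for brevity.'' Your extra step through the head projections, residual, row-wise LayerNorm and FFN, with the explicit no-positional-bias caveat, completes the argument for the full MAB rather than changing the method.
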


\begin{proof}
\textbf{For Property (1):} Let $\mathbf{Q}' = \mathbf{\Pi}_\pi \mathbf{Q}$. Since $\mathbf{\Pi}_\pi$ permutes the rows of $\mathbf{Q}$ and softmax operates independently on each row, we have:
\[ \mathrm{softmax}\left(\frac{(\mathbf{\Pi}_\pi \mathbf{Q})\mathbf{K}^\top}{\sqrt{d}}\right) = \mathbf{\Pi}_\pi \, \mathrm{softmax}\left(\frac{\mathbf{Q}\mathbf{K}^\top}{\sqrt{d}}\right) \]
Thus the output transforms as:
\[ \mathrm{Attn}(\mathbf{\Pi}_\pi \mathbf{Q}, \mathbf{K}, \mathbf{V}) = \mathbf{\Pi}_\pi \, \mathrm{softmax}\left(\frac{\mathbf{Q}\mathbf{K}^\top}{\sqrt{d}}\right)\mathbf{V} = \mathbf{\Pi}_\pi \, \mathrm{Attn}(\mathbf{Q}, \mathbf{K}, \mathbf{V}) \]

\textbf{For Property (2):} Let $\mathbf{K}' = \mathbf{\Pi}_\pi \mathbf{K}$ and $\mathbf{V}' = \mathbf{\Pi}_\pi \mathbf{V}$. The attention scores become:
\[ \mathbf{A}' = \mathrm{softmax}\left(\frac{\mathbf{Q}(\mathbf{\Pi}_\pi \mathbf{K})^\top}{\sqrt{d}}\right) = \mathrm{softmax}\left(\frac{\mathbf{Q}\mathbf{K}^\top \mathbf{\Pi}_\pi^\top}{\sqrt{d}}\right) \]
Since $\mathbf{\Pi}_\pi^\top$ permutes the \emph{columns} of $\mathbf{Q}\mathbf{K}^\top$, the softmax output satisfies $\mathbf{A}' = \mathbf{A} \mathbf{\Pi}_\pi^\top$. The output is then:
\[ \begin{aligned}
\mathrm{Attn}(\mathbf{Q}, \mathbf{\Pi}_\pi \mathbf{K}, \mathbf{\Pi}_\pi \mathbf{V}) 
&= \mathbf{A}' (\mathbf{\Pi}_\pi \mathbf{V}) \\
&= (\mathbf{A} \mathbf{\Pi}_\pi^\top) (\mathbf{\Pi}_\pi \mathbf{V}) \\
&= \mathbf{A} (\mathbf{\Pi}_\pi^\top \mathbf{\Pi}_\pi) \mathbf{V} = \mathbf{A} \mathbf{V} \\
&= \mathrm{Attn}(\mathbf{Q}, \mathbf{K}, \mathbf{V})
\end{aligned} \]
where we used $\mathbf{\Pi}_\pi^\top \mathbf{\Pi}_\pi = \mathbf{I}$. Hence the aggregation is invariant under synchronized permutation of keys and values.
\end{proof}

\subsection{Proof of Encoder Properties}

\begin{proposition}[ISAB is Equivariant]
$\mathrm{ISAB}(\mathbf{\Pi}_\pi \mathbf{X}) = \mathbf{\Pi}_\pi \mathrm{ISAB}(\mathbf{X})$.
\end{proposition}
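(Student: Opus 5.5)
We use the two-stage definition $\mathrm{ISAB}_m(\mathbf{X}) = \mathrm{MAB}(\mathbf{X}, \mathbf{H})$ with $\mathbf{H} = \mathrm{MAB}(\mathbf{I}, \mathbf{X})$, and track how a permutation of $\mathbf{X}$ propagates through each stage using the Lemma (Permutation Properties of MAB). The guiding observation is that the input set $\mathbf{X}$ appears in two different roles: as keys/values in the first stage, and as queries in the second. The first stage should therefore absorb the permutation completely, and the second should carry it through to the output.

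\textbf{Step 1 (the bottleneck is invariant).} Set $\mathbf{X}' = \mathbf{\Pi}_\pi \mathbf{X}$. In the first stage the inducing points $\mathbf{I}$ are learned parameters and do not depend on the input. Applying Property (2) of the Lemma with $\mathbf{K} = \mathbf{V} = \mathbf{X}$ gives
\[
\mathbf{H}' = \mathrm{MAB}(\mathbf{I}, \mathbf{\Pi}_\pi \mathbf{X}, \mathbf{\Pi}_\pi \mathbf{X}) = \mathrm{MAB}(\mathbf{I}, \mathbf{X}, \mathbf{X}) = \mathbf{H}.
\]

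\textbf{Step 2 (the output is equivariant).} In the second stage the queries are $\mathbf{\Pi}_\pi \mathbf{X}$ and the keys/values are $\mathbf{H}' = \mathbf{H}$. Property (1) of the Lemma then gives
\[
\mathrm{MAB}(\mathbf{\Pi}_\pi \mathbf{X}, \mathbf{H}) = \mathbf{\Pi}_\pi\, \mathrm{MAB}(\mathbf{X}, \mathbf{H}) = \mathbf{\Pi}_\pi\, \mathrm{ISAB}(\mathbf{X}).
\]
Chaining Steps 1 and 2 proves the proposition for a single layer. Stacking $L$ layers follows by induction, since a composition of equivariant maps is equivariant.

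\textbf{Main obstacle.} The Lemma's proof only addressed the core map $\mathrm{Attn}$, whereas MAB also contains a residual connection, LayerNorm, a row-wise FFN, and multiple heads with linear projections. I would close this gap before Step 1 with a short remark covering each component.
\begin{itemize}
\item \emph{Projections.} The linear maps $\mathbf{Q}W^Q$, $\mathbf{K}W^K$, $\mathbf{V}W^V$ act row-wise, so they commute with left-multiplication by $\mathbf{\Pi}_\pi$.
\item \emph{Multiple heads.} Multi-head attention is a column-concatenation of per-head attentions followed by a right-multiplication by $W^O$. Both operations preserve the two properties.
\item \emph{Residual term.} In Property (2) the residual adds the query $\mathbf{Q}$, which is unchanged, so invariance is preserved. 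In Property (1) it adds $\mathbf{\Pi}_\pi \mathbf{Q}$, which is consistent with equivariance.
\item \emph{LayerNorm and FFN.} Both are applied independently to each row, so they commute with row permutations and preserve both invariance and equivariance.
\end{itemize}
With these observations, both properties of the Lemma hold for the full MAB, and the two-step argument above goes through unchanged.
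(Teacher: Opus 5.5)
Your proof is correct and follows the same route as the paper. The inducing step $\mathbf{H} = \mathrm{MAB}(\mathbf{I}, \mathbf{X}, \mathbf{X})$ is invariant by Property (2) of the Lemma, and the refining step inherits equivariance from Property (1) because $\mathbf{X}$ is the query. Your extra remark extends the Lemma from the bare $\mathrm{Attn}$ map to the full MAB: row-wise projections, head concatenation with $W^O$, the residual through the query, and row-wise LayerNorm and FFN. It is sound, and it fills in a step the paper only waves at with ``for brevity, we focus on the core mechanism.''
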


\begin{proof}
ISAB uses inducing points $\mathbf{I}$ and input $\mathbf{X}$ through two steps:
\begin{enumerate}
    \item \textbf{Inducing Step:} $\mathbf{H} = \mathrm{MAB}(\mathbf{I}, \mathbf{X}, \mathbf{X})$. Since $\mathbf{X}$ serves as both Key and Value, by Lemma 1.2 with synchronized permutation:
    \[ \mathbf{H}(\mathbf{\Pi}_\pi \mathbf{X}) = \mathrm{MAB}(\mathbf{I}, \mathbf{\Pi}_\pi \mathbf{X}, \mathbf{\Pi}_\pi \mathbf{X}) = \mathbf{H}(\mathbf{X}) \]
    \item \textbf{Refining Step:} $\mathbf{Z} = \mathrm{MAB}(\mathbf{X}, \mathbf{H}, \mathbf{H})$. Since $\mathbf{X}$ is the Query, by Lemma 1.1:
    \[ \mathrm{MAB}(\mathbf{\Pi}_\pi \mathbf{X}, \mathbf{H}, \mathbf{H}) = \mathbf{\Pi}_\pi \mathrm{MAB}(\mathbf{X}, \mathbf{H}, \mathbf{H}) = \mathbf{\Pi}_\pi \mathbf{Z} \]
\end{enumerate}
Thus, the composition is equivariant.
\end{proof}

\begin{proposition}[PMA is Invariant]
$\mathrm{PMA}(\mathbf{\Pi}_\pi \mathbf{X}) = \mathrm{PMA}(\mathbf{X})$.
\end{proposition}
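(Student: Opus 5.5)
The plan is to reduce the claim directly to Property (2) of Lemma 1, the invariance of MAB under a synchronized permutation of keys and values. By definition, $\mathrm{PMA}_k(\mathbf{X}) = \mathrm{MAB}(\mathbf{S}, \mathbf{X}, \mathbf{X})$. Here the seed matrix $\mathbf{S} \in \mathbb{R}^{k \times d}$ is a learned parameter and does not depend on $\mathbf{X}$. Permuting the input gives
\[ \mathrm{PMA}_k(\mathbf{\Pi}_\pi \mathbf{X}) = \mathrm{MAB}(\mathbf{S}, \mathbf{\Pi}_\pi \mathbf{X}, \mathbf{\Pi}_\pi \mathbf{X}). \]
In this expression the query is untouched, and the keys and values are permuted by the same matrix $\mathbf{\Pi}_\pi$. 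This is exactly the hypothesis of Lemma 1.2, so the right-hand side equals $\mathrm{MAB}(\mathbf{S}, \mathbf{X}, \mathbf{X}) = \mathrm{PMA}_k(\mathbf{X})$. The argument is the same one used in the inducing step of the ISAB proposition, with $\mathbf{S}$ in place of $\mathbf{I}$.

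The only step needing care is that Lemma 1 is proved for the bare attention core, whereas MAB also contains the residual connections, LayerNorm, the FFN, and multi-head structure with learned projections. I would close this gap in three observations, taken in order.

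\begin{enumerate}
\item The learned projections act row-wise, and $(\mathbf{\Pi}_\pi \mathbf{X})\mathbf{W}^K = \mathbf{\Pi}_\pi(\mathbf{X}\mathbf{W}^K)$, so each head still sees synchronously permuted keys and values. Lemma 1.2 therefore applies headwise, and concatenating the heads and applying the output projection preserves invariance.
\item The residual term in $\mathbf{H} = \mathrm{LayerNorm}(\mathbf{S} + \mathrm{MultiHeadAttn}(\mathbf{S}, \mathbf{X}, \mathbf{X}))$ adds $\mathbf{S}$, which does not depend on the input. Hence $\mathbf{H}$ is invariant.
\item The second stage, LayerNorm and FFN, is a deterministic function of $\mathbf{H}$ alone, so the invariance carries through it.
\end{enumerate}

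I expect no real obstacle. If the implementation uses key-padding masks for variable-size sets, the mask must be permuted together with $\mathbf{X}$; masked entries then contribute zero weight in every column order, and Lemma 1.2 still applies.
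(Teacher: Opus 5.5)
Your proposal is correct and follows the paper's proof: write $\mathrm{PMA}(\mathbf{X}) = \mathrm{MAB}(\mathbf{S}, \mathbf{X}, \mathbf{X})$ with an input-independent seed matrix $\mathbf{S}$, then invoke Lemma 1.2 for a synchronized permutation of keys and values. Your extra checks go beyond the paper, which proves Lemma 1 only for the bare attention core, and they correctly close that gap: the row-wise projections, the residual term adding only $\mathbf{S}$, LayerNorm and FFN acting on the already-invariant $\mathbf{H}$, and permuting any padding mask together with $\mathbf{X}$.
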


\begin{proof}
PMA is defined as $\mathrm{MAB}(\mathbf{S}, \mathbf{X}, \mathbf{X})$ where $\mathbf{S}$ is a fixed seed matrix independent of input ordering. Applying Lemma 1.2 with synchronized permutation of keys and values yields the invariance directly.
\end{proof}

\subsection{Proof of Decoder Interface Invariance}

\begin{lemma}[Cross-Attention Invariance]
Let $\mathbf{Z}$ be equivariant features ($\mathbf{Z}_\pi = \mathbf{\Pi}_\pi \mathbf{Z}$), and let the decoder query $\mathbf{q}_t$ depend only on the previously generated tokens $\mathbf{y}_{<t}$ and permutation-invariant features (e.g., $\mathbf{g}$). Then:
\[ \mathrm{CrossAttn}(\mathbf{q}_t, \mathbf{Z}_\pi, \mathbf{Z}_\pi) = \mathrm{CrossAttn}(\mathbf{q}_t, \mathbf{Z}, \mathbf{Z}) \]
\end{lemma}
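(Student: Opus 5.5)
The plan is to reduce the statement to property (2) of Lemma~1 (invariance under a synchronized permutation of keys and values). The one subtlety is that the query $\mathbf{q}_t$ must itself be unchanged by the permutation; otherwise Lemma~1.2 does not apply.

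\textbf{Step 1: the query is unchanged.} By hypothesis, $\mathbf{q}_t$ depends only on the prefix $\mathbf{y}_{<t}$ and on permutation-invariant features such as $\mathbf{g}$. The global vectors $\mathbf{g}$ are invariant by the proposition that PMA is invariant. So it suffices to argue by induction on $t$. At $t=1$ the prefix is just \texttt{[BOS]}, and the query is a function of fixed parameters and invariant features, so it is the same under $\mathbf{Z}$ and $\mathbf{Z}_\pi$. For the inductive step, assume the prefix $\mathbf{y}_{<t}$ coincides under both orderings. The decoder self-attention uses only causal positional information over output tokens, never over encoder indices, so the query $\mathbf{q}_t$ computed from that prefix coincides as well. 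In practice the lemma takes this as a hypothesis. I would state it explicitly and defer the full induction to the end-to-end corollary, where the equality of cross-attention outputs at step $t$ feeds back into identical token distributions, and hence identical decoded tokens, at step $t+1$.

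\textbf{Step 2: apply Lemma~1.2.} Regard $\mathbf{q}_t$ as a $1\times d$ query matrix, and take keys and values $\mathbf{K}=\mathbf{V}=\mathbf{Z}$; under the learned projections these are $\mathbf{Z}W_K$ and $\mathbf{Z}W_V$. Because the projections act row-wise, $(\mathbf{\Pi}_\pi\mathbf{Z})W = \mathbf{\Pi}_\pi(\mathbf{Z}W)$, so the keys and values receive a synchronized permutation. Lemma~1.2 then gives the equality directly. Equivalently, in the scalar form of Eq.~\eqref{eq:cross_attn}, the weight $\alpha_{t,\pi(j)}$ computed on $\mathbf{Z}_\pi$ equals $\alpha_{tj}$ computed on $\mathbf{Z}$, since the softmax denominator is a sum over all indices and is reordered only. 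Reindexing the weighted sum by $j\mapsto\pi(j)$ therefore returns the same vector.

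\textbf{Step 3: extend to multiple heads and positional biases.} Multi-head attention applies Step~2 to each head separately. The outputs are then concatenated along the feature dimension, not the set dimension, and an output projection is applied, so both operations preserve the equality. The main obstacle is making explicit that no term in the attention logits depends on the encoder index $j$. This is exactly why the removal of T5 relative positional biases (Section~\ref{subsec:decoder}) is needed: with a bias $b_{t,j}$ the logits would no longer permute consistently, and the argument would fail. I would therefore state that assumption explicitly.

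Finally, the fused sequence $H_{\text{enc}}$ is a concatenation of equivariant blocks and invariant blocks. Any permutation acting within each block, or across blocks, is still a single permutation of $H_{\text{enc}}$, so the same argument covers the concatenation ordering remark made after Eq.~\eqref{eq:fusion}.
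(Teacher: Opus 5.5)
Your proof is correct and is essentially the paper's own argument. The paper's proof is exactly the scalar reindexing you give at the end of Step~2: the permuted weights are the original weights reindexed by $\pi$, so the weighted sum is unchanged. That is also the computation underlying Lemma~1.2, which you invoke.

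Your added care about row-wise key/value projections, per-head application, and the absence of index-dependent logit biases makes explicit what the paper leaves implicit. Step~1 is not needed for the lemma, since query invariance is assumed. If you carry out that induction for the end-to-end theorem, it must also run over decoder layers, because deeper-layer queries depend on lower-layer cross-attention outputs.
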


\begin{proof}
The context vector $\mathbf{c}_t$ is computed as:
\[ \mathbf{c}_t = \sum_{j=1}^n \alpha_{tj} \mathbf{z}_j, \quad \text{where } \alpha_{tj} = \frac{\exp(\mathbf{q}_t^\top \mathbf{z}_j / \sqrt{d})}{\sum_k \exp(\mathbf{q}_t^\top \mathbf{z}_k / \sqrt{d})} \]
Under permutation $\pi$, the features become $\mathbf{z}'_j = \mathbf{z}_{\pi(j)}$. The new attention weights are:
\[ \alpha'_{tj} = \frac{\exp(\mathbf{q}_t^\top \mathbf{z}_{\pi(j)} / \sqrt{d})}{\sum_k \exp(\mathbf{q}_t^\top \mathbf{z}_{\pi(k)} / \sqrt{d})} = \alpha_{t,\pi(j)} \]
The resulting context vector is:
\[ \mathbf{c}'_t = \sum_{j=1}^n \alpha'_{tj} \mathbf{z}'_j = \sum_{j=1}^n \alpha_{t,\pi(j)} \mathbf{z}_{\pi(j)} = \sum_{i=1}^n \alpha_{ti} \mathbf{z}_i = \mathbf{c}_t \]
where the last equality follows from reindexing $i = \pi(j)$ and the commutativity of addition. Hence the context vector is invariant.
\end{proof}

\subsection{Main Theorem}

\begin{theorem}[End-to-End Order Independence]
The architecture satisfies $p(\mathbf{y} \mid \mathbf{\Pi}_\pi \mathbf{X}) = p(\mathbf{y} \mid \mathbf{X})$ for any permutation $\pi$.
\end{theorem}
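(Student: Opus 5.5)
The proof composes the earlier results along the forward pass and then inducts over decoding steps. Fix a permutation $\pi$ acting on the peaks of one modality; by symmetry, the same argument applies independently to $\mathbf{\Pi}_{\text{H}}$ and $\mathbf{\Pi}_{\text{C}}$.

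\textbf{Encoder.} The embedding layer is applied row-wise, with no positional encoding. It therefore commutes with $\mathbf{\Pi}_\pi$, so the embedded input becomes $\mathbf{\Pi}_\pi\mathbf{X}$. We need every sub-layer of the MAB, not just the core attention, to respect the permutation. The residual addition, LayerNorm, and FFN are all row-wise, so they commute with row permutations. Multi-head attention is a column-wise concatenation of heads followed by a right multiplication by $W^O$, and both operations preserve left multiplication by $\mathbf{\Pi}_\pi$. Hence Lemma~1 holds for the full MAB. Applying Proposition~2 $L$ times by induction gives $Z(\mathbf{\Pi}_\pi\mathbf{X})=\mathbf{\Pi}_\pi Z(\mathbf{X})$. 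Proposition~3 then gives $G(\mathbf{\Pi}_\pi\mathbf{X})=G(\mathbf{X})$. The formula vector $G_{\text{F}}$ does not depend on $\mathbf{X}$ at all. By (\ref{eq:fusion}), the fused matrix therefore satisfies $H_{\text{enc}}'=\mathbf{P}H_{\text{enc}}$, where $\mathbf{P}$ is the block-diagonal permutation that acts as $\mathbf{\Pi}_\pi$ on the $Z$ block and as the identity elsewhere. In other words, $H_{\text{enc}}$ is equivariant as a single set.

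\textbf{Decoder.} Write $p(\mathbf{y}\mid\mathbf{X})=\prod_t p(y_t\mid\mathbf{y}_{<t},H_{\text{enc}})$. It suffices to show that every conditional factor is unchanged when $H_{\text{enc}}$ is replaced by $\mathbf{P}H_{\text{enc}}$. I would prove, by induction over decoder layers, that all decoder hidden states at positions $\le t$ are identical under both inputs. The decoder self-attention and its relative position biases act only on the generated token positions, not on the encoder states. Those biases therefore do not depend on $\pi$, and the decoder self-attention sees identical inputs whenever the previous layer's states are identical. Cross-attention has had its positional biases removed. It takes queries from the decoder states, which are identical by the inductive hypothesis, and keys and values $H_{\text{enc}}W^K$, $H_{\text{enc}}W^V$, which are permuted synchronously by $\mathbf{P}$. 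Lemma~4, or equivalently Lemma~1(2) applied head by head, shows that its output is unchanged. The FFN and LayerNorm preserve the equality. At the last layer, the logits and hence the softmax distribution over $y_t$ coincide, and taking the product over $t$ gives the theorem. A corollary is that beam search, being deterministic given these conditionals, returns identical candidates.

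\textbf{Main obstacle.} The delicate step is the lift from the stylized lemmas to the real modules. Lemma~4 is stated with raw $\mathbf{z}_j$ as keys and values, and with a hypothesis on $\mathbf{q}_t$ that is itself established only inductively. I would handle the first issue by noting that $\mathbf{P}(H_{\text{enc}}W)=(\mathbf{P}H_{\text{enc}})W$, so learned projections preserve synchronized permutation, and the second by the layer-wise induction above. I would also need to treat padding masks. A mask is attached to each row and permuted together with it, so masked logits remain paired with their rows, and Lemma~1(2) still applies with $-\infty$ entries.
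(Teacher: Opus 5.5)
Your proposal is correct and follows the same route as the paper's proof: equivariance of the stacked ISAB encoder, invariance of PMA, invariance of cross-attention under a synchronized key/value permutation, and determinism of the decoder. Your version is more careful than the paper's in several places, and these refinements close real gaps. The paper simply asserts that $\mathbf{q}_t$ depends only on $\mathbf{y}_{<t}$ and $\mathbf{g}$; this is not literally true in a multi-layer decoder, where later-layer queries depend on earlier cross-attention outputs, and your layer-wise induction handles exactly this. The paper also treats $\mathbf{g}$ as a separate decoder input rather than as rows of the fused $H_{\text{enc}}$, which your block permutation $\mathbf{P}$ corrects. Finally, your explicit handling of the projections, the residual, LayerNorm and FFN sublayers, and the padding masks covers steps the paper does not address.
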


\begin{proof}
The generation process depends on two encoder outputs:
\begin{itemize}
    \item $\mathbf{Z}$: output of ISAB $\implies$ permutation equivariant (Proposition 1)
    \item $\mathbf{g}$: output of PMA $\implies$ permutation invariant (Proposition 2)
\end{itemize}
At each decoding step $t$, the query $\mathbf{q}_t$ is a deterministic function of the previously generated tokens $\mathbf{y}_{<t}$ and the invariant global feature $\mathbf{g}$. By Lemma 2, the cross-attention context $\mathbf{c}_t = \mathrm{CrossAttn}(\mathbf{q}_t, \mathbf{Z}, \mathbf{Z})$ is invariant to permutations of $\mathbf{Z}$. Consequently, the input to the decoder's prediction layer—comprising $\mathbf{q}_t$, $\mathbf{c}_t$, and $\mathbf{g}$—is independent of the initial peak ordering. Since the decoder itself is a deterministic function, the conditional distribution $p(\mathbf{y} \mid \mathbf{X})$ satisfies permutation invariance.
\end{proof}

\section{Dataset Curation and Analysis}
\label{app:dataset_details}

This appendix details the construction pipeline and statistical characteristics of \textbf{\dataset}. The curation process was designed to systematically harvest high-quality experimental spectroscopy data from unstructured Supporting Information (SI) documents.

\subsection{Detailed Curation Pipeline}
The extraction workflow consists of five sequential stages, combining layout analysis, large language models (LLMs), and rule-based parsing to ensure high precision and recall.

\noindent\textbf{Stage 1: Document Preprocessing.} 
Raw PDF documents were first converted to high-resolution images (300 DPI) to preserve visual fidelity. Simultaneously, we employed \texttt{MinerU} to parse the document layout and extract textual content into structured Markdown format. To handle long documents efficiently, text sections exceeding 8,192 characters were processed using a sliding window approach with a 3-line overlap, ensuring contextual continuity across segments.

\noindent\textbf{Stage 2: Spectroscopic Text Extraction.} 
We utilized LLMs to identify text blocks containing spectroscopic data. A specialized rule-based parser using strict regular expressions was then applied to extract NMR parameters. The parser is designed to handle diverse reporting formats, including:
\begin{itemize}
    \item Range patterns with tilde or space separators (e.g., ``7.30$\sim$7.35'').
    \item Standard single-value patterns with multiplicities (e.g., ``7.32 (d, $J=7.2$ Hz, 1H)'').
    \item Bare chemical shift values.
\end{itemize}
Complex cases such as major/minor isomer mixtures were identified and processed separately. Extensive text normalization was performed to standardize chemical symbols and character encodings.

\noindent\textbf{Stage 3: Data Filtering \& Validation.} 
Extracted entries underwent rigorous validation. We checked for metadata completeness (e.g., presence of compound identifiers) and validated IUPAC names via iterative LLM verification. Duplicate entries sharing identical compound identifiers were deduplicated, prioritizing records with the most complete spectroscopic information.

\noindent\textbf{Stage 4: SMILES Conversion.} 
Valid IUPAC names were converted to SMILES strings using OPSIN in batches with timeout protection. The resulting SMILES were standardized and canonicalized using RDKit to ensure chemical validity and uniqueness.

\noindent\textbf{Stage 5: Spectroscopic Image Recognition.} 
For entries with associated spectrum images, we employed vision-language models to extract metadata (e.g., frequency, solvent). To prevent information leakage during model training, molecular structures appearing within the spectrum images were automatically detected and masked (whitened) using a YOLO-based object detection model.

\subsection{Dataset Statistics}

In this section, we provide detailed statistics on the \textbf{\dataset} corpus (2013--2025 subset). Table~\ref{tab:temporal_dist} presents the temporal distribution of the collected data, demonstrating consistent coverage of contemporary literature. Table~\ref{tab:spectrum_dist} details the distribution of spectroscopic modalities in the source collection, highlighting the diversity of the dataset.

\begin{table}[!hbtp]
    \small
    \centering
    \caption{Temporal Distribution of Molecular Data (2013--2025).}
    \label{tab:temporal_dist}
    \begin{tabular}{lrrrr}
        \toprule
        \textbf{Year} & \textbf{Docs} & \textbf{Molecules} & \textbf{Spectra} & \textbf{Avg.} \\
        \midrule
        2013 & 5,806 & 56,118 & 186,278 & 3.32 \\
        2014 & 4,496 & 42,439 & 138,528 & 3.26 \\
        2015 & 5,216 & 51,669 & 167,299 & 3.24 \\
        2016 & 5,915 & 60,942 & 197,464 & 3.24 \\
        2017 & 7,151 & 79,317 & 257,126 & 3.24 \\
        2018 & 9,239 & 102,910 & 330,830 & 3.21 \\
        2019 & 9,439 & 103,486 & 321,149 & 3.10 \\
        2020 & 7,932 & 91,141 & 282,041 & 3.09 \\
        2021--25 & 6,877 & 93,968 & 263,777 & 2.81 \\
        \midrule
        \textbf{Total} & \textbf{62,071} & \textbf{681,990} & \textbf{2,144,492} & \textbf{3.15} \\
        \bottomrule
    \end{tabular}
\end{table}

\begin{table}[hbtp]
    \centering
    \caption{Distribution of Spectroscopic Data Types.}
    \label{tab:spectrum_dist}
    \begin{tabular}{lrr}
        \toprule
        \textbf{Type} & \textbf{Count} & \textbf{Pct (\%)} \\
        \midrule
        $^1$H NMR & 1,235,881 & 29.2 \\
        $^{13}$C NMR & 1,038,831 & 24.5 \\
        Mass Spec & 1,066,046 & 25.2 \\
        IR & 495,083 & 11.7 \\
        $^{19}$F NMR & 110,273 & 2.6 \\
        HPLC & 90,015 & 2.1 \\
        $^{31}$P NMR & 36,061 & 0.9 \\
        $^{11}$B NMR & 18,766 & 0.4 \\
        Other & 148,249 & 3.5 \\
        \bottomrule
    \end{tabular}
\end{table}

\subsection{Comparison with NMRExp~\cite{Wang2025}}
\label{subsec:dataset_comparison}

To assess the distinctiveness and value of our curated corpus, we performed a comprehensive comparative analysis against \textbf{NMRExp}, a \textbf{concurrent work} that also aggregates large-scale NMR data. This comparison highlights the unique contribution of our pipeline regarding data quality, molecular exclusivity, and chemical space coverage.

\noindent\textbf{Data Uniqueness and Quality.}
A critical distinction lies in the curation standards. While NMRExp aggregates a massive volume of raw entries ($3.37$ million), it exhibits significant redundancy with an internal uniqueness ratio of only 46.5\% ($1.57$ million unique SMILES). In contrast, our pipeline emphasizes quality over raw quantity. The final \dataset{} subset consists of \textbf{239,997} samples which are \textbf{100\% unique} molecules, having undergone rigorous deduplication and validity checks.
 A random sample of 100 unique compounds was selected, ensuring diversity in molecular complexity and spectral types. This yielded 200 spectra (100 $^1H$ and 100 $^{13}C$). Evaluation Metrics:
Compound-level accuracy: The percentage of compounds for which all associated spectral data (including SMILES, solvent, frequency, and peak lists) were correctly extracted.
Spectrum-level accuracy: The percentage of individual spectra with fully accurate peak lists and metadata.
Image integrity: The accuracy of spectral image-to-compound ID linkage and visual completeness. It reached 96.5\%, 93.0\%, and 92\% respectively.
The high accuracy rates, particularly at the spectrum and compound levels, validate the robustness of our data extraction methodology. The primary source of errors was traced to ambiguities in legacy PDF formatting, which future iterations of the pipeline will address with enhanced layout analysis.


In addition, \dataset{} uniquely provides aligned spectroscopic \textbf{\emph{images}} alongside structured textual data, whereas NMRExp contains no spectrum images, enabling joint vision–language modeling and multimodal evaluation, even though the image modality is not exploited in the present study.

\noindent\textbf{Overlap and Complementarity.}
We analyzed the intersection of molecular structures between the two datasets to determine if \dataset{} merely subsets this concurrent repository. As illustrated in Figure~\ref{fig:venn_overlap}, the analysis reveals a remarkably low overlap:
\begin{itemize}
    \item Only \textbf{58,796} molecules are shared between the two datasets, representing a Jaccard overlap index of merely \textbf{3.36\%}.
    \item Crucially, \textbf{181,201} molecules (75.5\% of our dataset) are \textbf{exclusive} to \dataset{}.
\end{itemize}
This high proportion of unique entries demonstrates that mining Supplementary Information (SI) captures a distinct segment of chemical space often overlooked by other collection methods, providing significant complementary value to the community.

\noindent\textbf{Chemical Space Properties.}
Table~\ref{tab:comparison_stats} compares the physicochemical properties of the molecules in both datasets. The average molecular weight (MW) and atom counts are comparable, with \dataset{} averaging 327.96 Da and NMRExp averaging 340.58 Da. However, NMRExp includes a long tail of extremely large molecules (Max MW $>2200$, Max Atoms 172), whereas \dataset{} shows a more concentrated distribution (Max MW 944, Max Atoms 52). This suggests that our dataset is particularly focused on medium-sized organic molecules typical of medicinal chemistry and total synthesis research, ensuring stability during model training by filtering out overly complex outliers.

\begin{figure}
\begin{tikzpicture} 
    
    \definecolor{myteal}{RGB}{0, 128, 128}
    \definecolor{mygray}{RGB}{105, 105, 105}
    \definecolor{myorange}{RGB}{204, 102, 0}

    \filldraw[fill=myteal!15, draw=myteal, thick] (0,0) circle (1.8cm);
    
    \filldraw[fill=mygray!15, draw=mygray, thick] (3.0,0) circle (2.1cm);

    \begin{scope}
        \clip (0,0) circle (1.8cm);
        \fill[myorange!30] (3.0,0) circle (2.1cm);
    \end{scope}
    
    
    \node[text=myteal, font=\bfseries\large] at (-0.2, 1.3) {\dataset}; 
    
    \node[align=center, font=\small] at (-0.6, 0) {
        Unique to \dataset \\ 
        \textbf{181,201}
    };

    \node[text=mygray, font=\bfseries\large] at (3.3, 1.3) {NMRExp};
    
    \node[align=center, font=\small] at (3.3, 0) {
        Unique to NMRExp \\ 
        $\sim$1.51M
    };

    \node[align=center, font=\footnotesize, text=myorange!100!black] at (1.3, 0) {
        Overlap \\ 
        \textbf{58,796}
    };

\end{tikzpicture}
\caption{NMR spectroscopy datasets overlap between \dataset\ and NMRExp}
\label{fig:venn_overlap}
\end{figure}
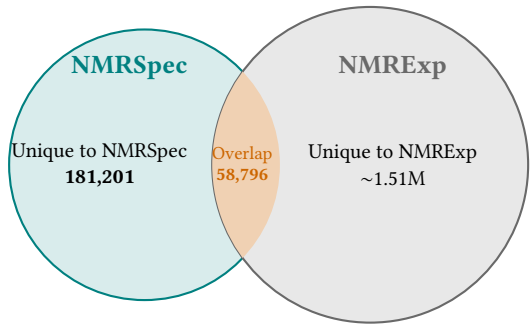

\begin{table}[hbtp]
    \centering
    \caption{Statistical comparison between \dataset{} and the concurrent work NMRExp. \dataset{} exhibits 100\% internal uniqueness and focuses on a tightly distributed range of medium-sized organic molecules.}
    \label{tab:comparison_stats}
    \small
    \begin{tabular}{lrr}
        \toprule
        \textbf{Metric} & \textbf{\dataset (Ours)} & \textbf{NMRExp} \\
        \midrule
        \multicolumn{3}{l}{\textit{Dataset Size \& Uniqueness}} \\
        Total Records & 239,997 & 3,372,987 \\
        Unique Molecules & \textbf{239,997} & 1,568,289 \\
        Internal Uniqueness & \textbf{100.0\%} & 46.5\% \\
        \midrule
        \multicolumn{3}{l}{\textit{Overlap Analysis}} \\
        Shared Molecules & \multicolumn{2}{c}{58,796} \\
        Exclusive Molecules & \textbf{181,201 (75.5\%)} & 1,509,493 (96.2\%) \\
        \midrule
        \multicolumn{3}{l}{\textit{Molecular Properties (Avg $\pm$ Std)}} \\
        Molecular Weight (Da) & $327.96 \pm 104.3$ & $340.58 \pm 124.1$ \\
        Heavy Atom Count & $22.99 \pm 7.36$ & $23.75 \pm 8.80$ \\
        Ring Count & $2.70$ (Median: 3) & $2.84$ (Median: 3) \\
        Max Heavy Atoms & 52 & 172 \\
        \bottomrule
    \end{tabular}
\end{table}

\section{Additional Experimental Analysis}
\label{app:additional_experiments}

\subsection{Impact of Model Scale}

In this section, we investigate the scaling behavior of \mname{} by evaluating different backbone sizes of the T5 architecture: \texttt{T5-small} (60M parameters), \texttt{T5-base} (220M parameters), and \texttt{T5-large} (770M parameters). 

\noindent\textbf{Scaling Analysis.} 
Table~\ref{tab:model_scaling_results} presents the quantitative performance across these variations. Interestingly, we observe that increasing the model size from \texttt{small} to \texttt{large} does not yield a significant improvement in sequence accuracy or Tanimoto similarity. For instance, the Top-1 sequence accuracy for \texttt{T5-large} is only marginally better (or comparable) to that of \texttt{T5-base}. 

This lack of substantial scaling gain suggests that for the current complexity of experimental NMR structure elucidation, the performance is primarily bottlenecked by the \emph{availability and diversity of experimental data} rather than the representation capacity of the transformer backbone. While \dataset is an order of magnitude larger than previous open-access databases, it remains relatively sparse compared to the billions of tokens typically used to saturate large-scale T5 models in natural language processing. These findings underscore that matching architectural inductive biases (e.g., set-based encoding) and ensuring data high-fidelity are more critical than pure parameter scaling in this domain.

\noindent\textbf{Implement Details.} 
All model variants (\texttt{T5-small}, \texttt{T5-base}, and \texttt{T5-large}) employ the encoder architecture described in Section~\ref{sec:methodology}. Training uses a fixed learning rate of $1\times10^{-4}$, gradient accumulation steps of 1, gradient clipping with maximum norm 1.0, and 32-bit floating-point precision. To maintain a consistent effective batch size of 2048 across scales, we dynamically adjust the per-GPU batch size according to available memory: \texttt{T5-small} uses 2 GPUs with batch size 1024 per GPU, \texttt{T5-base} uses 4 GPUs with batch size 512 per GPU, and \texttt{T5-large} uses 8 GPUs with batch size 256 per GPU. All experiments are conducted on NVIDIA H200 (141 GB) nodes. Total computational cost ranges from approximately 100 GPU hours for \texttt{T5-small} to 300 GPU hours for \texttt{T5-large}.

\begin{table}[h]
    \small
    \centering
    \caption{Performance comparison of \mname{} across different model scales. All variants use the same set-transformer encoder and are evaluated on the \dataset{} test set.}
    \label{tab:model_scaling_results}
    \begin{tabular}{lcccc}
        \toprule
        \textbf{Backbone} & \textbf{Parameters} & \textbf{Top-1 Acc} & \textbf{Top-1 Tani} & \textbf{Top-1 Tok} \\
        \midrule
        T5-small & 60M  & 42.81 & 74.62 & 71.30 \\
        T5-base  & 220M & 42.71 & 74.52 & 71.37 \\
        T5-large & 770M & 30.80 & 63.91 & 62.31 \\
        \bottomrule
    \end{tabular}
\end{table}

\noindent\textbf{Training Dynamics.} 
Figure~\ref{fig:scaling_curves} illustrates the training and validation loss curves for the three model sizes. All models exhibit similar convergence patterns, further confirming that the additional parameters in larger variants do not lead to superior structural understanding under the current data regime. The validation accuracy plateaus at comparable levels across all scales, indicating that structural ambiguity in experimental spectra—rather than model capacity—is the limiting factor.

\begin{figure}[h]
    \centering
    \begin{minipage}[b]{0.48\linewidth}
        \centering
        \includegraphics[width=\linewidth]{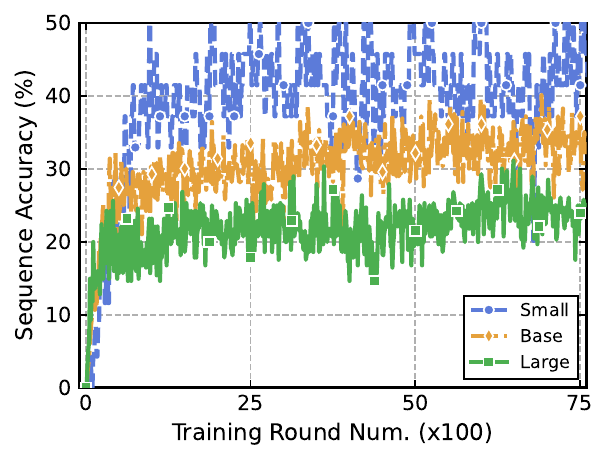}
        \small (a) Sequence accuracy
    \end{minipage}
    \hfill
    \begin{minipage}[b]{0.48\linewidth}
        \centering
        \includegraphics[width=\linewidth]{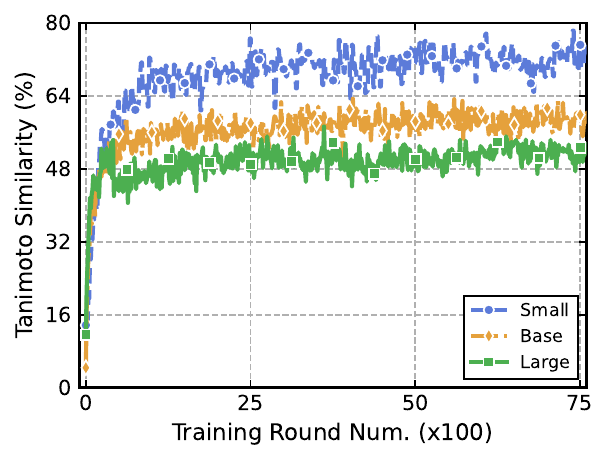}
        \small (b) Tanimoto similarity
    \end{minipage}
    \caption{Training dynamics across model scales. Contrary to conventional scaling laws, the small model consistently achieves the highest performance in both sequence accuracy and molecular similarity metrics, followed by the base model, while the large model exhibits the weakest generalization capability—suggesting potential optimization challenges or overfitting in larger architectures for this spectral-to-structure task.}
    \label{fig:scaling_curves}
\end{figure}

\subsection{Encoder Equivariance Analysis}
\label{app:model_scaling}

\begin{figure}[!h]
\small
\centering
\includegraphics[width=0.95\linewidth]{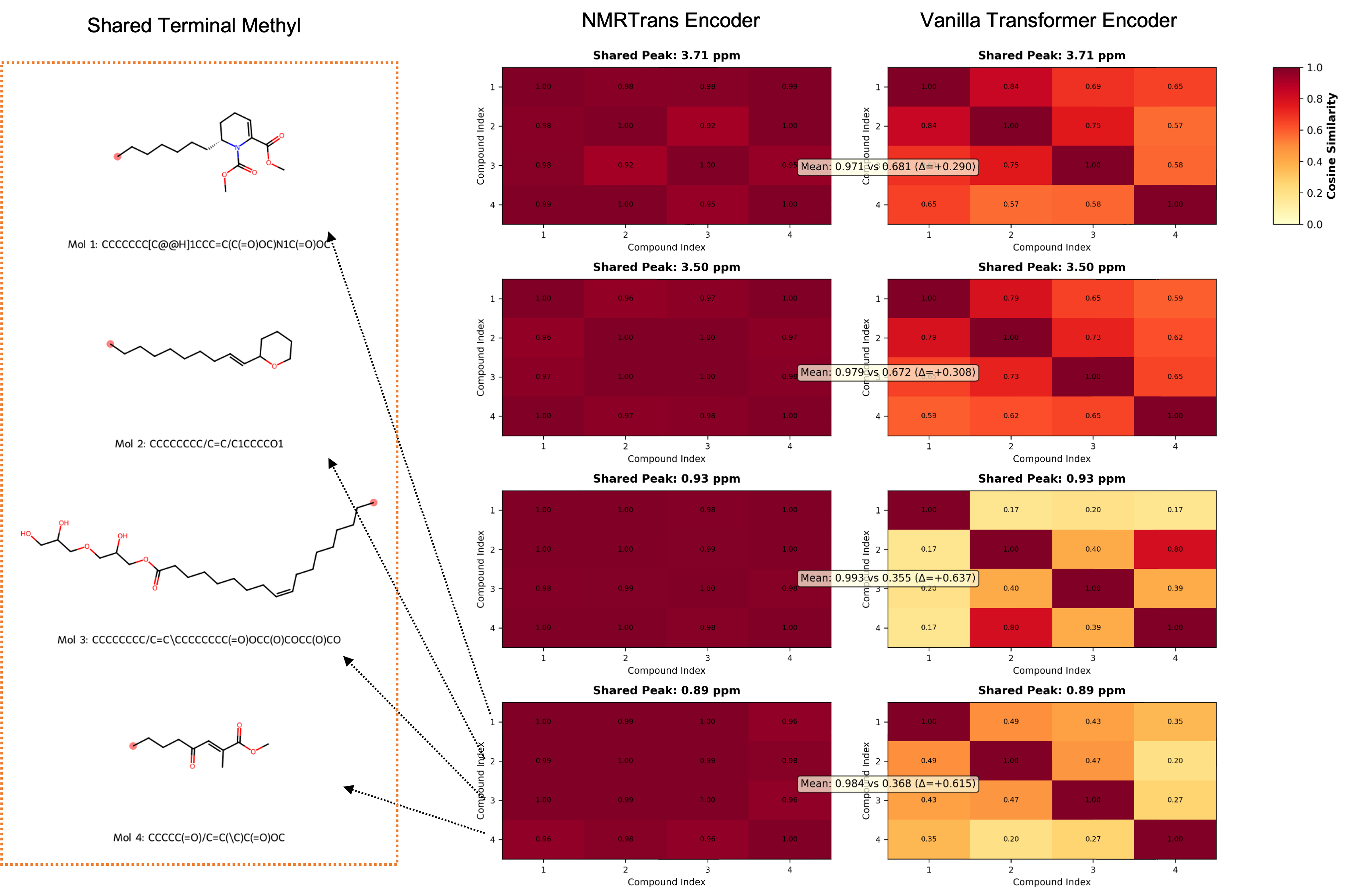}
\caption{Permutation equivariance validation via feature similarity matrices of shared $^1$H NMR peaks. We analyze four representative compounds (Mol 1--4) containing conserved \textbf{ester, alkyl, and cyclic ether motifs}. For each shared peak, \textbf{NMRTrans Encoder (left column)} produces near-identity similarity matrices ($\mu \geq 0.971$), confirming strict permutation equivariance where chemically equivalent peaks yield identical representations regardless of input order. In contrast, the \textbf{Vanilla Transformer Encoder (right column)} exhibits severe position-dependent variations due to positional encodings. This disparity is evident for the methoxy peak at $\delta$~3.71~ppm, where similarity drops from 0.971 to 0.681 ($\Delta = 0.290$), and is even more pronounced for terminal methyl protons at $\delta$~0.89~ppm, where the baseline's similarity collapses to 0.368.}
\label{fig:equivariance}
\end{figure}

To validate the permutation equivariance of our NMRTrans encoder (based on the Set Transformer architecture), we analyze shared $^1$H NMR peaks across compound groups with conserved structural motifs. For a set of molecules sharing specific chemical environments—such as the methoxy group at $\delta \approx 3.71$~ppm or terminal methyls at $\delta \approx 0.89$~ppm—we extract the encoder output features $z_h$ corresponding to these peaks and compute pairwise cosine similarities across the compounds.

As shown in Figure~\ref{fig:equivariance}, NMRTrans consistently produces near-uniform similarity matrices where off-diagonal elements remain remarkably high (mean $\mu \geq 0.971$ across all groups), confirming that feature representations depend solely on chemical attributes rather than input ordering. In stark contrast, the Vanilla Transformer exhibits significant position-dependent variations. Its reliance on positional encodings artificially differentiates identical chemical environments based on their sequence position, violating the fundamental requirement for physically meaningful spectral modeling.

The performance gap is particularly striking for the proton peak at $\delta$~3.71~ppm, where the standard Transformer shows a $\sim$30\% degradation in feature consistency (mean 0.681 vs. 0.971). The collapse is most severe for high-field alkyl protons at $\delta$~0.93~ppm and 0.89~ppm, where the lack of permutation invariance causes the baseline model's similarity to drop to 0.355 and 0.368, respectively, while NMRTrans maintains near-perfect consistency ($\mu \geq 0.984$). These results empirically validate that the ISAB-based architecture effectively decouples chemical semantics from arbitrary peak indexing, satisfying the equivariance property $f(\pi(\mathcal{P})) = \pi(f(\mathcal{P}))$.

\section{Discussion and Future Work}
\label{sec:discussion}

\subsection{Discussion}
\label{subsec:discussion}

First, the performance gap between simulated and experimental spectra is not merely a data distribution issue but stems from \textit{fundamental physical discrepancies}: simulated spectra assume idealized conditions (isolated molecules in vacuum, perfect decoupling), whereas experimental spectra embed solvent effects, dynamic equilibria, and instrument-specific artifacts that cannot be fully captured by current quantum chemistry approximations~\cite{Schreckenbach1995}. 
This explains why even large-scale simulated training fails to transfer robustly—models learn spurious correlations between idealized peak patterns and structures that break down under experimental noise. 
\textbf{\dataset} directly addresses this by providing authentic spectral distributions, but the residual performance gap suggests that \textit{experimental NMR spectra inherently contain ambiguous structural information} for certain molecular classes—a limitation of the spectroscopic technique itself rather than modeling deficiency.

Second, the architectural advantage of Set Transformers is not simply about ``removing positional encodings'' but about respecting the \textit{physical ontology} of NMR data. 
As established in Section~\ref{sec:preliminary}, NMR peaks form a mathematically unordered set because chemical shift values ($\delta$) are intrinsic molecular properties independent of acquisition order. 
Standard Transformers with positional encodings implicitly assume that peak index $i$ carries structural semantics (e.g., ``peak 5 encodes methyl groups''), which contradicts NMR physics: the same molecule measured on different instruments—or even with different pulse sequences on the same instrument—yields peaks in arbitrary order. 
Our ablation study (Table~\ref{tab:ablation_on_architecture}) quantifies this mismatch: the +6.08\% gain from removing positional encodings isolates the penalty of forcing sequential bias onto unordered data. 
Crucially, this penalty amplifies with molecular complexity (Figure~\ref{fig:mol_size}), as spectral congestion in larger molecules increases the likelihood of arbitrary peak reordering during acquisition—precisely where permutation invariance becomes critical.

(1) The 42.81\% Top-1 accuracy ceiling likely reflects genuine information-theoretic constraints of 1D NMR: constitutional isomers with similar carbon frameworks (e.g., regioisomers in polyoxygenated terpenes) often produce near-identical $^{13}$C spectra, making unambiguous distinction impossible without 2D connectivity data. 
(2) Performance degradation beyond 40 heavy atoms stems from exponential growth in isomeric possibilities under fixed elemental composition—a combinatorial challenge rather than modeling deficiency. 
(3) SMILES representation inherently discards stereochemical information that NMR spectra \textit{can} encode (e.g., through $^3J_{\text{HH}}$ coupling constants), creating a representational bottleneck where the model ``knows'' more than it can express.

\subsection{Future Work}
\label{subsec:future_work}

Three extensions address these limitations while respecting practical constraints of experimental chemistry:

\begin{enumerate}[leftmargin=*,noitemsep]
    \item \textbf{Minimal 2D integration for connectivity resolution.} We propose augmenting 1D inputs with sparse 2D constraints: for ambiguous regions identified via uncertainty quantification (e.g., overlapping aliphatic chains), request targeted HSQC/HMBC experiments that provide only critical H--C or long-range H--H connectivities. These cross-peaks can be represented as relational tuples $(\delta_{\text{H}}, \delta_{\text{C}})$ and fused via set attention without architectural overhaul—addressing the isomer ambiguity limitation with minimal experimental overhead.
    
    \item \textbf{Uncertainty-guided experimental design.} Integrating Monte Carlo dropout during inference would quantify prediction uncertainty per spectral region (e.g., high uncertainty in 0.5--2.0 ppm aliphatic zone). This uncertainty map could directly guide chemists toward minimal additional experiments (e.g., ``acquire COSY to resolve ambiguity between C3--C4 vs. C4--C5 coupling''), transforming the model from a passive predictor to an active experimental collaborator.
    
    \item \textbf{Chirality-aware decoding via coupling constraints.} To overcome SMILES' stereochemical blindness, we propose conditioning the decoder on \textbf{experimentally derived} $^3J_{\text{HH}}$ values (extracted from $^1$H spectra) using Karplus relationship priors. During beam search, candidate SMILES violating expected dihedral-angle constraints would be penalized—enabling stereochemical disambiguation without requiring full 3D structure prediction. This leverages existing spectral information rather than demanding new experimental modalities.
\end{enumerate}


\section{GenAI Disclosure}
The authors used large language models  solely for language polishing of the manuscript. All scientific content, analysis, and conclusions were generated and verified by the authors, who take full responsibility for the final text.

\section{Case Study}
\label{app:case_study_visuals}

In this section, we provide additional qualitative comparisons between \mname{} and existing baselines. These visualizations highlight the model's ability to reconstruct precise chemical connectivity from experimental spectral inputs.

\begin{figure*}[htbp]
    \centering
    \includegraphics[width=0.95\linewidth]{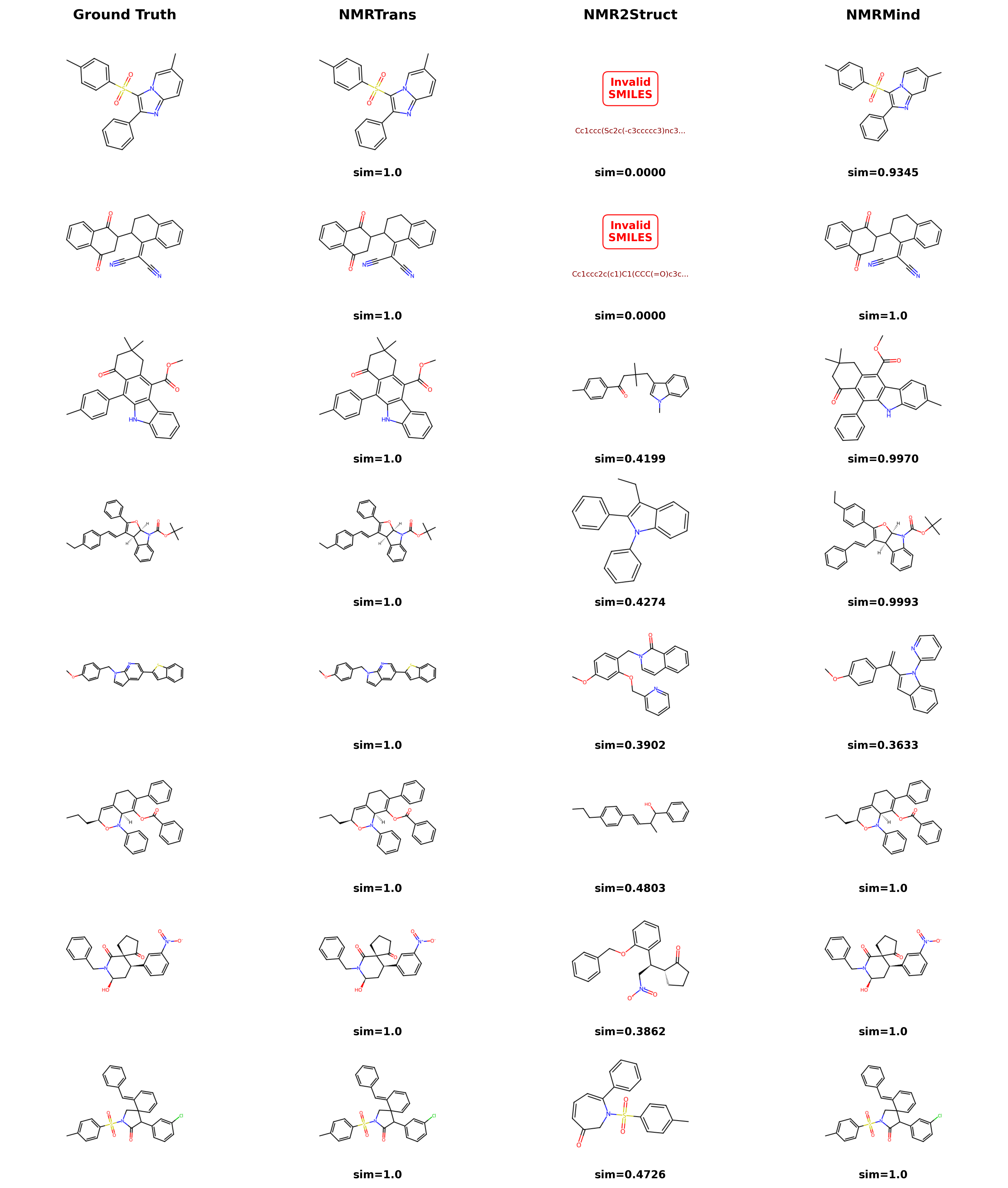}
    \caption{Qualitative comparison of molecular structure elucidation. We compare the ground-truth structures with predictions from \mname{}, NMR2Struct, and NMRMind across diverse chemical classes.}
    \label{fig:app_case_1}
\end{figure*}

\begin{figure*}[htbp]
    \centering
    \includegraphics[width=0.95\linewidth]{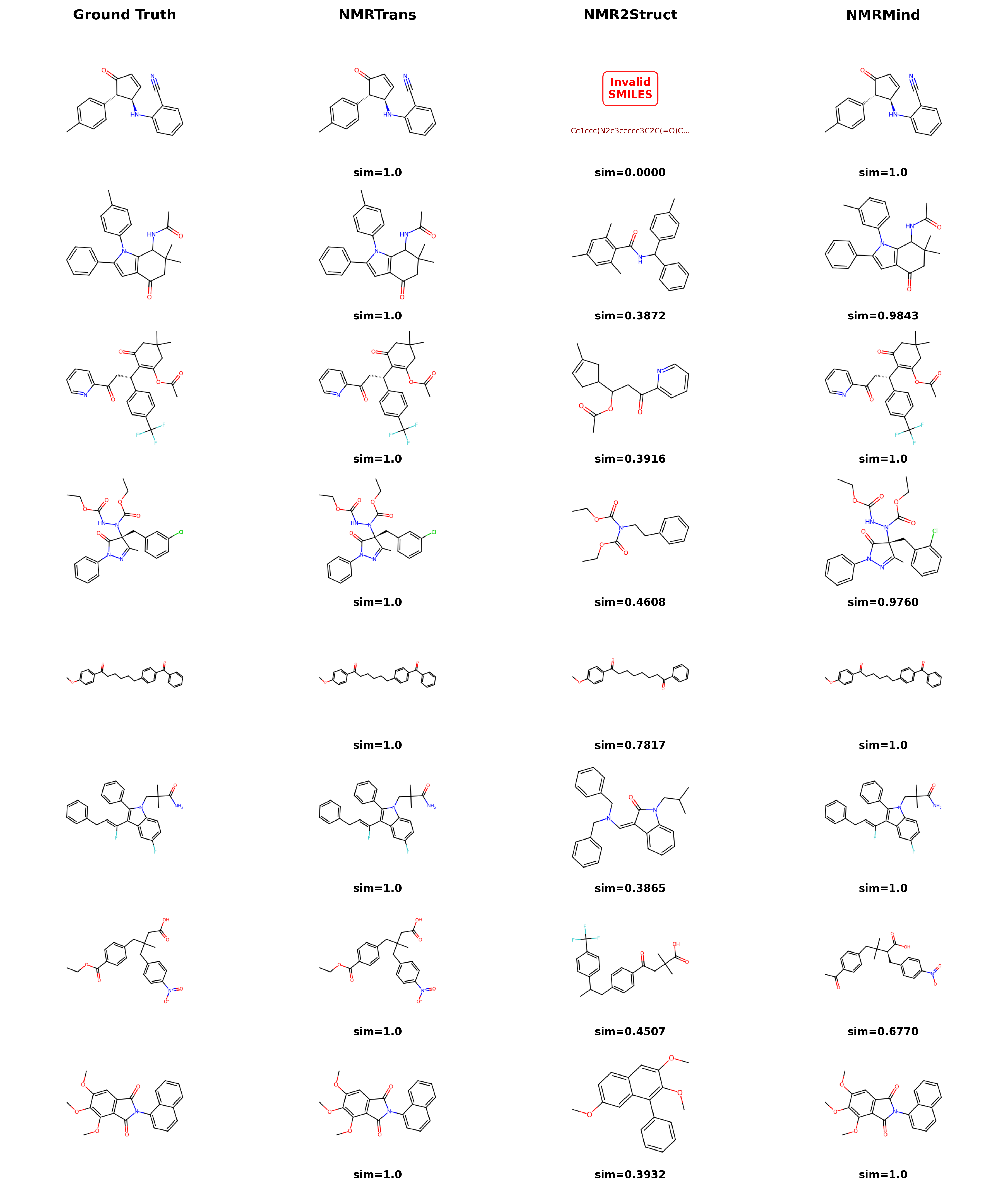}
    \caption{Qualitative comparison of molecular structure elucidation. We compare the ground-truth structures with predictions from \mname{}, NMR2Struct, and NMRMind across diverse chemical classes.}
    \label{fig:app_case_2}
\end{figure*}

\end{document}